\DeclareMathOperator*{\argmax}{argmax} 
\newcommand{\citet}[1]{\citeauthor{#1} \shortcite{#1}}
\title{Bi-level Actor-Critic for Multi-agent Coordination \thanks{Weinan Zhang and Jun Wang are the corresponding authors of this paper.}}
\author{
Haifeng Zhang\textsuperscript{\rm 1},
Weizhe Chen\textsuperscript{\rm 2},
Zeren Huang\textsuperscript{\rm 2},
Minne Li\textsuperscript{\rm 1}, 
Yaodong Yang\textsuperscript{\rm 3},
Weinan Zhang\textsuperscript{\rm 2},
Jun Wang\textsuperscript{\rm 1}
\\  
\textsuperscript{\rm 1}University College London\\ 
\textsuperscript{\rm 2}Shanghai Jiao Tong University\\ 
\textsuperscript{\rm 3}Huawei R\&D UK\\
haifeng.zhang@ucl.ac.uk, chenweizhe@sjtu.edu.cn, sjtu\_hzr@sjtu.edu.cn, minne.li@cs.ucl.ac.uk \\ yaodong.yang@huawei.com, wnzhang@sjtu.edu.cn, junwang@ucl.ac.uk
}
\newtheorem{theorem}{Theorem}
\newtheorem{lemma}{Lemma}
\newtheorem{definition}{Definition}
\newtheorem{assumption}{Assumption}
\begin{document}

\maketitle

\begin{abstract}
    
    Coordination is one of the essential problems in multi-agent systems. Typically multi-agent reinforcement learning (MARL) methods treat agents equally and the goal is to solve the Markov game to an arbitrary Nash equilibrium (NE) when multiple equilibra exist, thus lacking a solution for NE selection. In this paper, we treat agents \emph{unequally} and consider Stackelberg equilibrium as a potentially better convergence point than Nash equilibrium in terms of Pareto superiority, especially in cooperative environments. Under Markov games, we formally define the bi-level reinforcement learning problem in finding Stackelberg equilibrium. We propose a novel bi-level actor-critic learning method that allows agents to have different knowledge base (thus intelligent), while their actions still can be executed simultaneously and distributedly. The convergence proof is given, while the resulting learning algorithm is tested against the state of the arts. We found that the proposed bi-level actor-critic algorithm successfully converged to the Stackelberg equilibria in matrix games and find an asymmetric solution in a highway merge environment. 
   
\end{abstract}

\section{Introduction}
    In a multi-agent system, the effect of any agent's action on the environment also depends on the actions taken by other agents and coordination is needed to consistently break ties between equally good actions or strategies \cite{bu2008comprehensive}. This problem is essential especially in the circumstances where the agents are not able to communicate. In game theory, coordination game is defined as the game with multiple Nash equilibria. Various criteria for Nash equilibrium selection were proposed in the game theory literature such as salience \cite{vanderschraaf1995convention} and fairness \cite{rabin1993incorporating}, where the agents are assumed to know the game model before applying these criteria. For the environments where agents are not able to know the game model but can learn it through interactions with the environments, multi-agent reinforcement learning approaches were proposed to find a Nash equilibrium, including Nash Q-learning \cite{hu2003nash}, MADDPG \cite{lowe2017multi} and the Mean-Field Q-learning \cite{yang2018mean}. These model-free approaches train the agents centrally to converge to a Nash equilibrium and then execute the agents distributively. However, these approaches can not guarantee a particular converged Nash equilibrium, which leads to uncertainty and sub-optimality. 
    
    To tackle this problem, we reconsider the coordination problem from an asymmetric angle. Although the original game model is symmetric that agents make decisions simultaneously, we are still able to define a priority of decision making for the agents in the training phase and keep simultaneous decision making in the execution phase. In this asymmetric game model, the Stackelberg equilibrium (SE) \cite{von2010market} is naturally set up as the learning objective rather than the Nash equilibrium (NE). The SE optimizes the leader's policy given that the follower always plays the best-response policy. Despite its discrimination on the follower, we surprisingly find the SE is Pareto superior than the NE in a wide range of environments. For example, in the cooperative games, the SE is guaranteed to be Pareto optimal, whereas only one of the NEs achieves this point, as Table \ref{table:coordination-game} shows. In a non-cooperative case shown in Table \ref{table:se-vs-ne}, the SE is not included in the set of the NEs and is Pareto superior to any NE. In general, our empirical study shows the SE is likely to be Pareto superior to the average NE in games with high cooperation level. 
    
    For solving the SE, a wide variety of bi-level optimization methods were proposed \cite{dempe2018bilevel}. However, our problem setting differs from the traditional bi-level optimization problem in two aspects: 1) we consider a multi-state environment where the objective function is a summation of the sequential discounted rewards; 2) our game model is unknown and can only be learned through interactions. Actually, the traditional bi-level optimization problem can be regarded as a stateless model-based version of our problem. We formally define our problem as the bi-level reinforcement learning problem and propose a novel bi-level actor-critic algorithm to solve it. We train the actor of the follower and the critics of both agents centrally to find an SE and then execute the agents distributively. Our experiments in the small environments and a simulated highway merge environment demonstrate the efficiency of our algorithm, outperforming the state-of-the-art algorithms.

    \begin{table}[t]
        \begin{subtable}{.5\linewidth}
            \centering
            \small
            \begin{tabular}{llll}
                                   & \multicolumn{1}{c}{X}       & \multicolumn{1}{c}{Y}       & \multicolumn{1}{c}{Z}       \\ \cline{2-4} 
            \multicolumn{1}{l|}{A} & \multicolumn{1}{l|}{15, 15} & \multicolumn{1}{l|}{10, 10}   & \multicolumn{1}{l|}{0, 0}   \\ \cline{2-4} 
            \multicolumn{1}{l|}{B} & \multicolumn{1}{l|}{10, 10}   & \multicolumn{1}{l|}{10, 10} & \multicolumn{1}{l|}{0, 0}   \\ \cline{2-4} 
            \multicolumn{1}{l|}{C} & \multicolumn{1}{l|}{0, 0}   & \multicolumn{1}{l|}{0, 0} & \multicolumn{1}{l|}{30, 30}   \\ \cline{2-4} 
            \end{tabular}
            \caption{}
            \label{table:coordination-game}
        \end{subtable}%
        \begin{subtable}{.5\linewidth}
            \centering
            \small
            \begin{tabular}{llll}
                                   & \multicolumn{1}{c}{X}       & \multicolumn{1}{c}{Y}       & \multicolumn{1}{c}{Z}       \\ \cline{2-4} 
            \multicolumn{1}{l|}{A} & \multicolumn{1}{l|}{20, 15} & \multicolumn{1}{l|}{0, 0}   & \multicolumn{1}{l|}{0, 0}   \\ \cline{2-4} 
            \multicolumn{1}{l|}{B} & \multicolumn{1}{l|}{30, 0}   & \multicolumn{1}{l|}{10, 5} & \multicolumn{1}{l|}{0, 0}   \\ \cline{2-4} 
            \multicolumn{1}{l|}{C} & \multicolumn{1}{l|}{0, 0}   & \multicolumn{1}{l|}{0, 0} & \multicolumn{1}{l|}{5, 10}   \\ \cline{2-4} 
            \end{tabular}
            \caption{}
            \label{table:se-vs-ne}
        \end{subtable}
        \caption{Coordination games. (a) A cooperative game where A-X and C-Z are the NE. C-Z is also the SE and the Perato optimality point. (b) A non-cooperative game where A-X is the SE, B-Y and C-Z are the NE. The SE is Pareto superior to any NE in this game.}
    \end{table} 
    
\section{Preliminaries}

\subsection{Markov Game}
    
    In an $n$-player Markov game \cite{littman1994markov} (or stochastic game) $\langle S, A_i, P, R_i, \gamma \rangle$, $S$ denotes the state space, $A_i$ denotes agent $i$'s action space and $A$ denotes the joint action space, $P:S \times A \rightarrow \text{PD}(S)$ \footnote{In this paper, $\text{PD}(X)$ denotes the probability distribution space over discrete set $X$.} denotes the transition function, $R_i:S \times A_i \rightarrow \mathcal{R}$ denotes the reward function for agent $i$, and $\gamma$ denotes the discount factor. Agents take actions simultaneously in each state following their policies $\pi_i:S \rightarrow \text{PD}(A_i)$. The objective of agent $i$ is to maximize its discounted cumulative reward $\sum_t \gamma^t r_i^{t}$, where $r_i^{t}$ is the reward agent $i$ receives in time-step $t$. We also call Markov games as multi-agent reinforcement learning (MARL) problems.

\subsection{Related MARL Solutions}

    For a Markov game, we have Bellman equations that characterize the optimal state-values $V_i^*(s)$ and action-values $Q_i^*(s,a)$:
    \begin{align}
        \label{eq:bellman}
        Q_i^*(s, \vec{a}) = R(s, \vec{a}) + \gamma \sum_{s'} P(s, \vec{a}, s') V_i^*(s').
    \end{align}

    The Minimax-Q method \cite{littman1994markov} attempts to find the highest worst case values in zero-sum games whose state-values are computed as:
    \begin{align}
        V_1^*(s) = \max_{\pi_1 \in \Pi_1} \min_{a_2 \in A_2} Q^*_1(s, \pi_1, a_2) = - V_2^*(s),
    \end{align}
    where $Q^*_1(s, \pi_1, a_2) = \sum_{a_1 \in A_1} \pi_1(s, a_1) Q^*_1(s, a_1, a_2)$ and $\Pi_1$ denotes the policy space of agent 1. Our bi-level method generalizes the minimax method from zero-sum games to general-sum games.
    
    The Nash-Q method \cite{hu2003nash} attempts to find the Nash equilibrium whose state-values are computed as:
    \begin{align}
        V_i^*(s) = \textit{NASH}_i\big(Q_1^*(s), Q_2^*(s), ..., Q_n^*(s)\big) 
    \end{align}
    where $\textit{NASH}_i(\vec{x}_1, \vec{x}_2, ..., \vec{x}_n)$ denotes the $i$-th agent's payoff in a Nash equilibrium of the matrix game formed by $\vec{x}_1, \vec{x}_2, ..., \vec{x}_n$. The Nash-Q method also generalizes the minimax method to general-sum games, but in a different direction compared to our method. Our bi-level method attempts to find Stackelberg Equilibrium rather than Nash equilibrium. 
    

\subsection{Bi-level Optimization}
    
    In this paper, we assume the agents in a two-player Markov game are asymmetric that the following agent observes the actions of the leading agent, which results in solving a bi-level optimization problem for a Markov game. The original bi-level optimization problem is formulated as below:
    \begin{align}
        \label{eq:multi-level-opt}
        \min_{x_1} & ~~~ f_1(x_1, x_2) \nonumber\\
        \text{s.t. } & ~~~ g_1(x_1, x_2) \le 0 \\
        & ~~~ \min_{x_2} ~~~ f_2(x_1, x_2) \nonumber\\
        & ~~~~~ \text{s.t. } ~~~ g_2(x_1, x_2) \le 0 \nonumber
    \end{align}
    where $f_i, i=1,2$ are the objective functions and $g_i, i=1,2$ are the constraint functions in each level. 
    
    The bi-level optimization problem can be equivalently described as a Stackelberg game where the upper-level optimizer is the leader and the lower-level optimizer is the follower and the solution of the bi-level optimization problem is the Stackelberg equilibrium.
    
\section{Bi-level Reinforcement Learning}

\subsection{Problem Formulation}
    
    Connecting bi-level optimization with Markov game, $x_i$ in Eq.~(\ref{eq:multi-level-opt}) corresponds to agent $i$'s policy $\pi_i$, $f_i$ corresponds to agent $i$'s cumulative reward and $g_i$ corresponds to the constraint of action space. Assuming Agent 1 as the leader and Agent 2 as the follower, our problem is formulated as:
    \begin{align}
        \label{eq:problem}
        \max_{\pi_1} & ~~~ \mathbb{E}_{r_1^{1},r_1^{2}... \sim \pi_1, \pi_2} \sum_{t=1}^{\infty} \gamma^t r_1^{t} \nonumber \\
        \text{s.t.} & ~~~\pi_1 \in \Pi_1  \\
        & ~~~\max_{\pi_2} ~~~ \mathbb{E}_{r_2^{1},r_2^{2}... \sim \pi_1, \pi_2} \sum_{t=1}^{\infty} \gamma^t r_2^{t} \nonumber \\
        & ~~~~~~~ \text {s.t.} ~~~ \pi_2 \in \Pi_2. \nonumber
    \end{align}
    We call this problem \emph{bi-level reinforcement learning} (BiRL). BiRL can be viewed as a multi-state version of the Stackelberg game \cite{von2010market} and extends the standard bi-level optimization problem in two dimensions: 1) the objective is a summation of the discounted rewards in sequential states; 2) the form of the objective function is unknown and can only be learned through interactions with the environment in a model-free way. The standard bi-level optimization problem can be viewed as a stateless model-based version of our problem. 
    
\subsection{Stackelberg Equilibrium vs. Nash equilibrium}
    
    We formulate BiRL to tackle the coordination problem in MARL. In game theory, coordination game is defined as a game with multiple Nash equilibria and the coordination problem can be regarded as a Nash equilibrium selection problem. In this paper, we consider Stackelberg equilibria as a potentially better solution for coordination games. Fig.~\ref{fig:grid-game} is an example demonstrating the difference between NE and SE in a Markov game. We figure out two advantages of the SE over the NE. 
    
    The first advantage of SE is the certainty. Multiple NEs exist in most games while multiple SEs exist only in a class of games. \footnote{Multiple SEs only exist when given the policy of the leader, multiple policies of the follower achieve the maximal payoff, or given the best response of the follower, multiple policies of the leader achieve the maximal payoff.} Existing MARL methods mainly converge to an arbitrary NE, which leads to uncertainty. Since the SE is less multiple in most games, it is more clear and stable to be a learning objective. By setting the SE as the objective, we actually attempt to avoid the coordination problem (or the NE selection problem) rather than solving it.
    
    The second advantage of SE is the performance. The SE may achieve better payoff than the average NE in coordination environments in terms of Pareto superiority. An extreme example is the cooperative games. The SE always achieves the Pareto optimality point in a cooperative game while only the best NE achieves so, as we showed in Table \ref{table:se-vs-ne} and Fig.~\ref{fig:grid-game}. In other words, both the leader and the follower achieve higher payoffs in the SE than in the average NE. We intuitively believe that this result would still hold in games with less (but still high) cooperation levels. 
        
    For demonstrating our belief, we formally define the cooperation level of two-player Markov games as the correlation between the cumulative rewards of the agents:
    \begin{align}
        \textit{CL} = \frac{\sum_{\vec{\pi}} (V_1^{\vec{\pi}} - \bar{V}_1)(V_2^{\vec{\pi}} - \bar{V}_2)}{\sqrt{\sum_{\vec{\pi}}(V_1^{\vec{\pi}} - \bar{V}_1)^2 \sum_{\vec{\pi}}(V_2^{\vec{\pi}} - \bar{V}_2)^2}}
    \end{align}
    where $V_i^{\vec{\pi}}$ is short for $V_i^{\vec{\pi}}(s_0)$ denoting the average discounted cumulative reward for agent $i$ from the start state $s_0$ following the joint policy $\vec{\pi}$ and $\bar{V}_i=\frac{1}{|\vec{\pi}|}\sum_{\vec{\pi}}V_i^{\vec{\pi}}$. Under this definition, the cooperation levels of a cooperative game and a zero-sum game are 1 and -1 respectively. 
    
    We empirically study the relationship between the cooperation level of a game and the average payoff achieved by the agents in the average NE and the SE. The results in Fig.~\ref{fig:se-vs-ne} demonstrate that both the leader and the follower achieve higher payoff in the SE not only in fully cooperative games but also in the games with high cooperation level. We also find that the number of Nash equilibiria in a game is positively correlated with the cooperation level, which suggests that the coordination problem is more likely to occur in games with high cooperation level. Hence, we argue that the SE may in general be Pareto superior to the average NE in coordination problems, especially in highly cooperative games. 
        
    \begin{figure}[h]
        \centering
        \includegraphics[width=0.3\columnwidth]{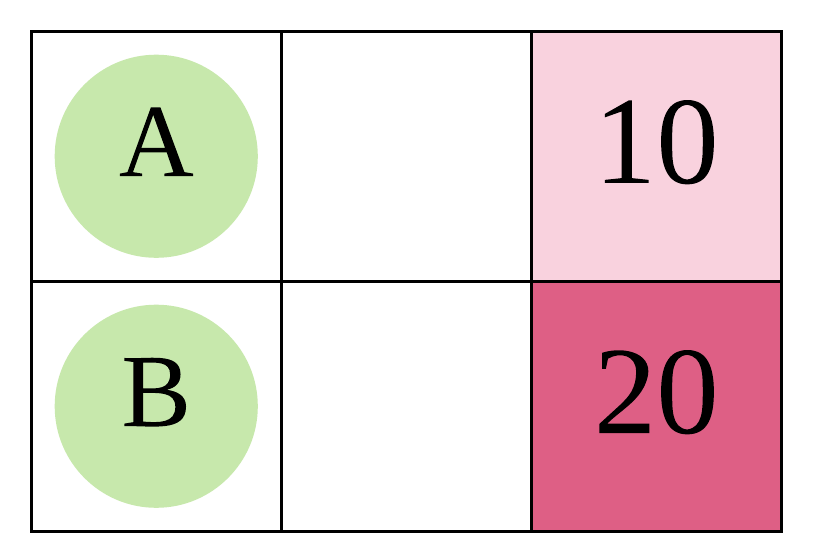}
        \caption{A cooperative game example for BiRL. Agent A and B move simultaneously in the grid and receive the common reward only when they are both in the 10 or 20 square. Joint policies lead both agents to 10 or 20 are Nash equilibria, but only joint policies lead both agents to 20 are Stackelberg equilibria and are the solutions for BiRL.}
        \label{fig:grid-game}
    \end{figure}

    \begin{figure}[h]
        \centering
        \includegraphics[width=1.0\columnwidth]{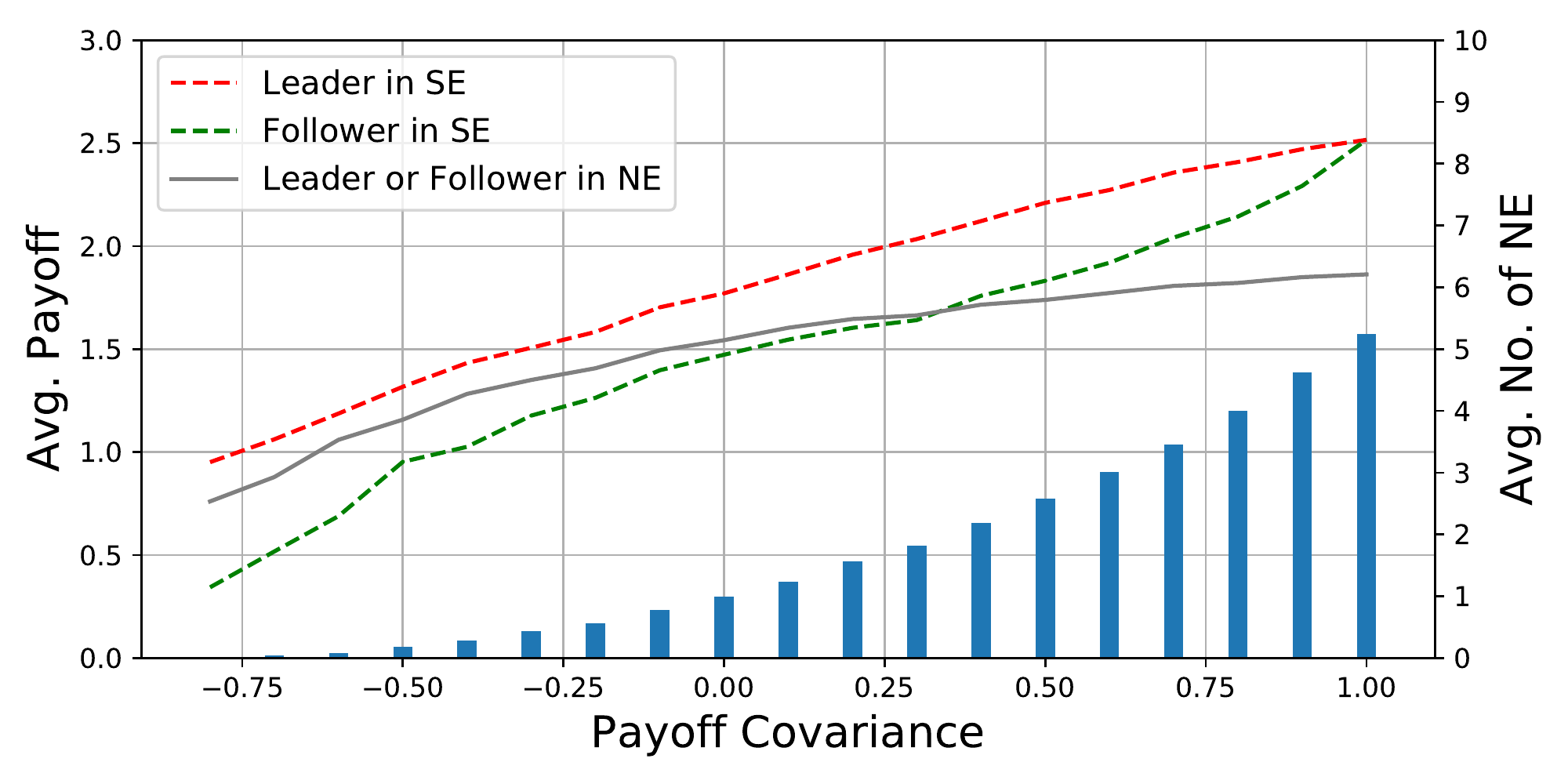}
        \caption{SE vs. NE. We sample the payoffs for a $10 \times 10$ matrix game using the multivariate normal distribution with 0 for mean, 1 for variance and various parameters for covariance, which represents the cooperation level of the generated matrix game. When the covariance equals to 1, the game is fully cooperative. The result for each covariance parameter is averaged from $2000$ independent trials. Similar results are found in matrix games with different sizes ranging from $5 \times 5$ to $100 \times 100$, which is provided in the supplementary. }
        \label{fig:se-vs-ne}
    \end{figure}

\subsection{Bi-level Tabular Q-learning}

    
    
    
    Similar to the minimax-Q and Nash-Q, we can define the bi-level Bellman equation by specifying the calculation method for the optimal state-values in Eq.~(\ref{eq:bellman}):
    \begin{align}
        \label{eq:bellman-stackelberg}
        V_i^*(s) = \textit{Stackelberg}_i(Q_1^*(s), Q_2^*(s)),
    \end{align}
    where $\textit{Stackelberg}_i(\vec{x}_1, \vec{x}_2)$ denotes the $i$-th agent's payoff in the Stackerberg Equilibrium of the matrix game formed by $\vec{x}_1, \vec{x}_2$.
    

    Based on the bi-level Bellman equation, we are able to update the Q-values iteratively by Eq.~(\ref{eq:bellman}) and (\ref{eq:bellman-stackelberg}). Formally, We have the update rules for $Q_1$ and $Q_2$ tables given a transaction $\langle s, a_1, a_2, s', r_1, r_2 \rangle$ with learning rate $\alpha_i$:
    \begin{align}
        &a_1' \leftarrow \argmax_{a_1} Q_1(s', a_1, \argmax_{a_2} Q_2(s', a_1, a_2)), \label{eq:select_a1} \\ 
        &a_2' \leftarrow \argmax_{a_2} Q_2(s', a_1', a_2), \label{eq:select_a2} \\ 
        &Q_1(s, a_1, a_2) \leftarrow (1-\alpha_1)Q_1(s, a_1, a_2) \nonumber\\ 
        &~~~~~~~~~~~~~~~~~~~~~~~~~~+ \alpha_1 (r_1 + \gamma Q_1(s', a_1', a_2')), \\
        &Q_2(s, a_1, a_2) \leftarrow (1-\alpha_2)Q_2(s, a_1, a_2)  \nonumber\\ 
        &~~~~~~~~~~~~~~~~~~~~~~~~~~+ \alpha_2 (r_2 + \gamma Q_2(s', a_1', a_2')).
    \end{align}
    This tabular method was also studied in \cite{littman2001leading} and \cite{kononen2004asymmetric}. However, these works mainly focused on solving asymmetric problems while our motivation is to solve symmetric coordination problems using an asymmetric method. 
    
\subsection{Bi-level Actor-Critic}

    \begin{figure}
        \centering
        \includegraphics[width=0.9\columnwidth]{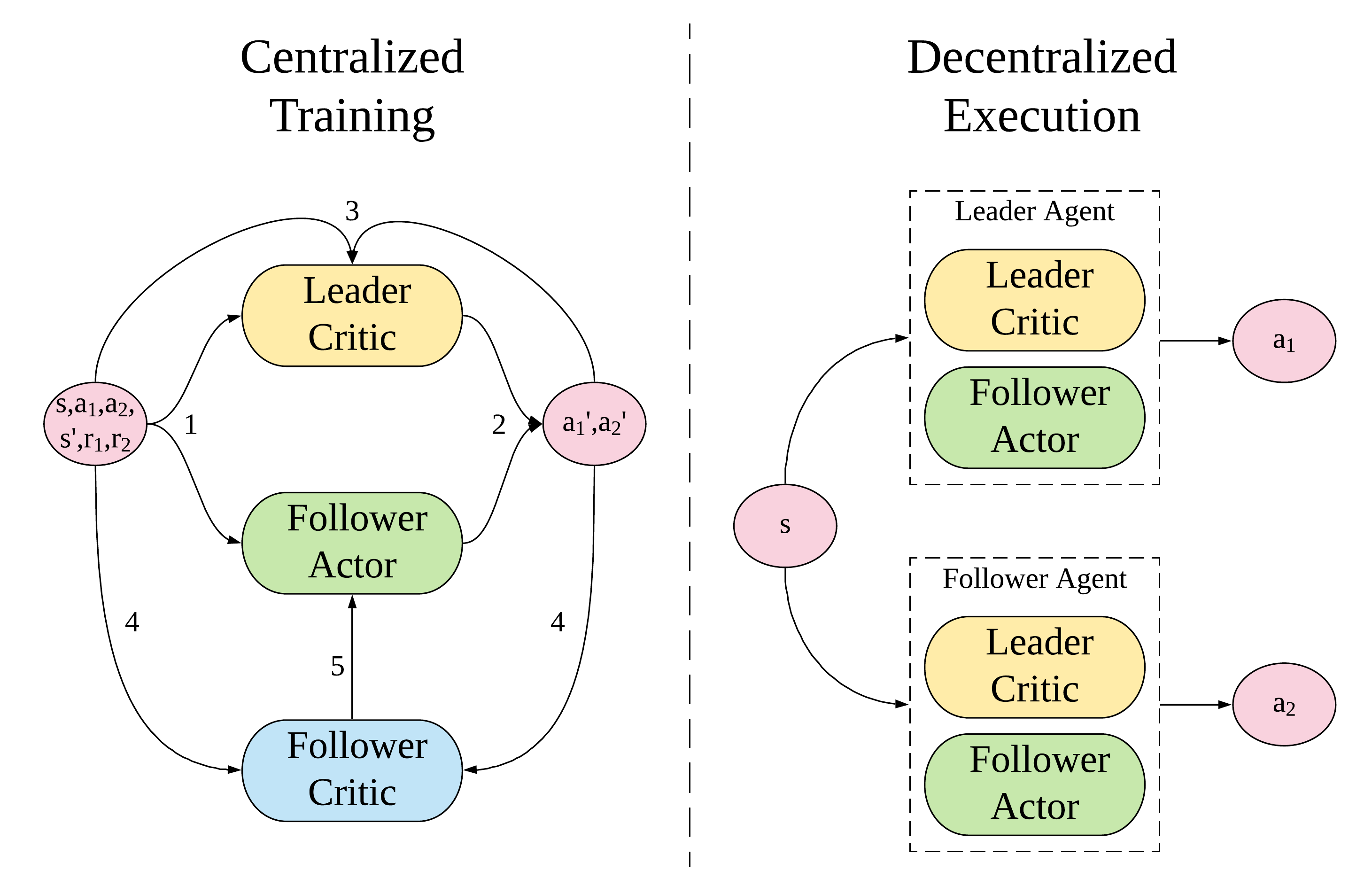}
        \caption{Structure of bi-level actor-critic. In the training phase, the joint action in the next state is computed firstly, then the three models are updated accordingly. In the execution phase, both agents have the leader critic and the follower actor models and perform Stackelberg equilibria independently.}
        \label{fig:bi-ac}
    \end{figure}
    
    In Eq.~(\ref{eq:select_a1}), we need to enumerate the actions in both levels to select action $a_1'$, which leads to $|A_1| \cdot |A_2|$ visits to the $Q_2$ table. When $Q_2$ is modeled by an approximation function, i.e. a neural network, the calculations of Eq.~(\ref{eq:select_a1}) could be time-consuming. Furthermore, if we extend the bi-level Q-learning methods to multi-level, the computation complexity of $a_1'$ would increase in exponential w.r.t. the number of level. 
    
    For solving this problem, we propose the bi-level actor-critic (Bi-AC) method which introduces an actor for the follower while keeping the leader as a Q-learner. Formally, let $\pi_2(s, a_1;\phi_2) \in \text{PD}(A_2)$ denote the policy model (or actor) of agent 2, which takes agent 1's action as its input in addition to the current state. We also model the two critics using approximation functions for both agents. We have the following update rules given a transaction $\langle s, a_1, a_2, s', r_1, r_2 \rangle$ with learning rate $\alpha_i, \beta$:
    \begin{align}
        &a_1' \leftarrow \argmax_{a_1} Q_1(s', a_1, \pi_2(s', a_1; \phi_2); \theta_1), \\
        &a_2' \leftarrow \pi_2(s', a_1';\phi_2), \label{eq:select_a2_large} \\
        &\delta_i \leftarrow r_i + \gamma Q_i(s', \vec{a}';\theta_i) - Q_i(s, \vec{a};\theta_i),i=1,2, \label{eq:fa-delta} \\
        & \theta_i \leftarrow \theta_1 + \alpha_i \delta_i \nabla_{\theta_i} Q_i(s, \vec{a}; \theta_i), i=1,2, \label{eq:fa-theta} \\
        &\phi_2 \leftarrow \phi_2 + \beta \nabla_{\phi_2} \text{log} \pi_2(s, \vec{a}; \phi_2)Q_2(s, \vec{a}; \theta_2).
    \end{align}
    where $\pi_2'(s, a_1; \phi_2)$ is modeled by a Gumbel-Softmax estimator \cite{jang2016categorical} which computes $a_2'$ directly. 

    For the environments with continuous action space, we model agent 2's policy using a deterministic model $\mu_2(s, a_1; \phi_2) \in A_2$ which is updated by the deterministic policy gradient method \cite{silver2014deterministic}. The Q-network of agent 1 can be updated by the soft Q-learning \cite{haarnoja2017reinforcement} method.
    
    Bi-AC is a centralized-training-decentralized-execution method as Fig.~ \ref{fig:bi-ac} shows. The three models are trained together given off-policy episodes. In execution, the trained leader critic model and follower actor model are both allocated to and executed by the leader and the follower. In such way, the two agents are able to achieve the Stackelberg equilibrium distributively. 
    
    In the partially observable environments, we train two additional actors $\pi_i'(o_i)$ as the approximators for each agent, where the corresponding critics are the trained leader and follower critics. The approximators allow the two agents play joint actions forming Stackerberg equilibria based on their own observations.

    

    Bi-AC can be naturally extended to $n$-level actor-critic. In the case of continuous action space, we define deterministic policy models:
    \begin{align}
        \mu_i(s, a_1, a_2, ... a_{i-1}; \theta_i), i=1..n.
    \end{align}
    We also model the Q-functions for each agent as:
    \begin{align}
        Q_i(s, a_1, a_2, ... a_n; \phi_i), i=1..n.
    \end{align}
    In each training step, the actions in the next step are determined one by one from the upper-level agent to the lower-level agent and the models are updated accordingly: 
    \begin{align}
        &a_i' \leftarrow \mu_i(s', a_{1..i-1}'; \theta_i), \\
        &\delta_i \leftarrow r_i  + \gamma Q_i(s', \vec{a}';\phi_i) - Q_i(s, \vec{a};\phi_i), \\
        &\theta_i \leftarrow \theta_i + \alpha \delta_i \nabla_{\theta_i} Q_i(s, \vec{a}; \theta_i), \\
        &\phi_i \leftarrow \phi_i + \alpha \nabla_{\phi_i}\mu_i(s, a_{1..i-1}; \phi_i) \cdot \nabla_{a_i}Q_i(s, \vec{a} ;\theta_i).
    \end{align}
    
    In practical problems, the models can be modified slightly to contain multiple agents in each level, where agents in the same level take actions simultaneously and the lower-level agents observe the actions of the upper-level agents. 
    
\subsection{Convergence and Limitation}
    
    Bi-AC will converge to the Stackelberg equilibrium under the following assumptions:
    \begin{enumerate}
        \item Every stage game $(Q^t_1(s), Q^t_2(s))$ for all t and s, has a global optimal point , and agents' payoffs in this point are selected by the actor function to update the critic functions with probability $1$.
        \item The critic learning rates $\alpha_{t}$ for the $t$-th transaction satisfies $\sum_{t=0}^{\infty}\alpha^t(s, \vec{a})=\infty$, $\sum_{t=0}^{\infty}[\alpha^{t}(s, \vec{a})]^2 < \infty$ hold uniformly and with probability $1$.
        \item Every stage game $(Q_1^t(s), Q_2^{t}(s))$ for all $t$ and $s$ has a global optimal point, and the agents' payoffs in this point are selected by Bi-AC to update the critic functions with probability $1$.
    \end{enumerate}
    Given these assumptions, we can use the method of Cauchy sequence convergence to prove the convergence of our algorithm. The detailed proof and discussion are provided in the supplementary. 
    
    The assumption $3$ above is a strong assumption which may not be met even in the very start of the training process. However, as shown later in the experiment part, our algorithm does converge in many cases. 
    
    Our algorithm is based on the bi-level Bellman equation, which is a necessary but not sufficient condition of the solution of BiRL. Therefore, there may exist convergent points other than the solution of BiRL. We provide an example in the supplementary.

\section{Related Work}

    In MARL, various approaches were proposed to tackle the coordination problem \cite{bu2008comprehensive}, especially for the cooperative environments. A general approach is applying the social convention which breaks ties by ordering of agents and actions \cite{boutilier1996planning}. Our method is compatible with social convention in the sense that we find the SE as the common knowledge of the agents about the game, based on which they can form social conventions. For cooperative games, the optimistic exploration were proposed \cite{claus1998dynamics} for reaching optimal equilibrium. \citet{lauer2000algorithm} used maximal estimation to update Q-value which ensures convergence to the optimal equilibrium given the reward function is deterministic. For the case of stochastic reward function, FMQ \cite{kapetanakis2002improving}, SOoN \cite{matignon2009coordination} and LMRL2 \cite{wei2016lenient} were proposed. These works share the idea of optimistic expectation on the cooperative opponent, which could not be extended to general-sum games. 

    Communication is an essential method to facilitate coordination. CommNet \cite{sukhbaatar2016learning} used a centralized network architecture to exchange information between agents. BicNet \cite{peng2017multiagent} proposed the bidirectional RNNs to exchange information between agents in an actor-critic setting. MADDPG \cite{lowe2017multi} proposed the centralized-training-decentralized-execution scheme which is also adopted by our method. Other works in this area include DIAL \cite{foerster2016learning}, COMA \cite{foerster2018counterfactual} and MD-MADDPG \cite{pesce2019improving}. 
    In addition, understanding other agents in multi-agent environments is of vital importance. Opponent modeling methods are helpful for coordination in many circumstances. ROMMEO \cite{tian2019regularized} applied maximum entropy to model the opponent. \citet{wen2019probabilistic} introduced the idea of recursive reasoning between two agents. Opponent modeling methods adopt the decentralized training scheme while our training is centralized. 

    Our work tackle the coordination problem from a asymmetric angle. BiRL is an extension of the bi-level optimization \cite{dempe2018bilevel} or the Stackelberg game \cite{von2010market}. For solving the original bi-level problem, stateless model-based evolutionary algorithms were proposed, such as BLEAQ \cite{sinha2014improved}. Extensively, a stateless model-free leader-follower problem was studied \cite{Zhang2012BilevelDS} where the objective functions are not visible. In the other dimension, BiMPC \cite{mintz2018control} studied the multi-state model-based Stackelberg game under the linear-quadratic assumption. In our paper, we formulate the multi-state model-free problem of BiRL, which extends the original bi-level problem in both dimensions.

    Our Bi-AC method contains critics which are iteratively updated by the Bellman equation. There are a series of MARL methods adopting the similar update scheme. Minimax-Q \cite{littman1994markov} solved the two-player zero-sum Markov games. Afterwards, fiend-and-foe learning \cite{littman2001friend}, Nash-Q \cite{hu2003nash}, CE-Q \cite{greenwald2003correlated}, Coco-Q \cite{sodomka2013coco} and AQL \cite{kononen2004asymmetric} were proposed successively. Among them, AQL updates the Q-value by solving the Stackelberg Equilibrium in each iteration, which can be regarded as the value-based version of Bi-AC. Compared to AQL, Bi-AC is able to work in multi-level or continuous action space environments. Another difference between AQL and our work is the motivation that we propose the Stackelberg Equilibrium as a potentially better solution for solving symmetric coordination problems while AQL focused on asymmetric problems. Other works applying Stackelberg Equilibrium to solve asymmetric problems include Bully \cite{littman2001leading} and DeDOL \cite{wang2019deep}.
    
\section{Experiment}

    We performed the experiments in three coordination environments comparing Bi-AC with the state-of-the-art MARL algorithms. \footnote{Our experiments are repeatable and the source code is provided in https://github.com/laonahongchen/Bilevel-Optimization-in-Coordination-Game.} 
    
\subsection{Algorithms}
    
    We compared Bi-AC with I-DQN, MADDPG, TD3, Hysteretic-DQN and Lenient-DQN, where I-DQN, Hysteretic-DQN and Lenient-DQN are decentralized training algorithms and MADDPG and TD3 are centralized training algorithms. In each experiment, we fully explored the actions for 1000 steps in the beginning and then applied the decaying $\epsilon$-greedy method for explorations. A three layer fully connected neural network with ReLU \cite{nair2010rectified} activation function was applied for the models in each algorithm to approximate the actor and critic functions. We trained the critic models with the replay buffer and the target network introduced in DQN \cite{mnih2015human}. We used the Gumbel-Softmax estimator \cite{jang2016categorical} in the actor function when applicable.

\subsubsection{Bi-AC}

    We realized the Bi-AC algorithm where the two critics were modeled by DQN. 
        

\subsubsection{I-DQN}    

    We tested independent DQN which regards other agents as a part of the environments. I-DQN is not guaranteed to converge and if it converges it will converge to a Nash equilibrium. 

\subsubsection{MADDPG}

    We tested MADDPG \cite{lowe2017multi} as the baseline of the centralized-training-decentralized-execution algorithm. 
    
\subsubsection{TD3}

    Twin Delayed Deep Deterministic policy gradient (TD3) \cite{fujimoto2018addressing} enhances MADDPG by addressing function approximation error, which is the state-of-the-art MARL algorithm. 

\subsubsection{Hysteretic-DQN}

    Hysteretic-DQN \cite{DBLP:conf/icml/OmidshafieiPAHV17} is based on hysteretic Q-learning \cite{DBLP:conf/iros/MatignonLF07} where a smaller learning rate is used when an update would reduce a Q-value, leading to an optimistic update function which puts more weights on positive experiences.
    
\subsubsection{Lenient-DQN}

    Lenient-DQN \cite{DBLP:conf/atal/PalmerTBS18} combines DQNs with leniency, which increases the likelihood of convergence towards the global optimal solution within fully cooperative environments that require implicit coordination strategies.

    \begin{figure}[t]
        \centering
        \includegraphics[width=0.9\columnwidth]{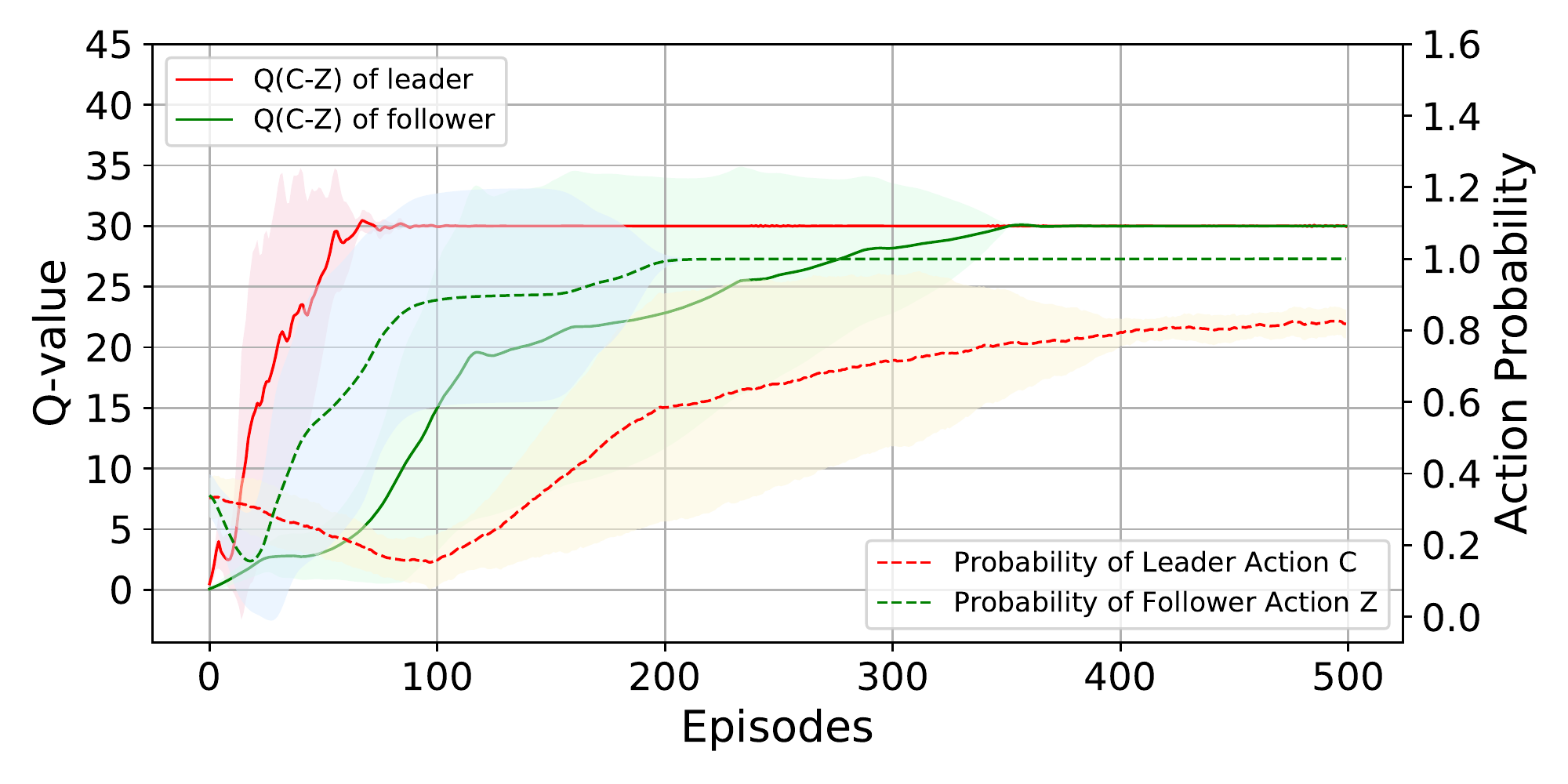}
        \caption{Convergence of Bi-AC in Escape game}
        \label{fig:matrix_1}
    \end{figure}{}

    \begin{figure}[t!]
    \begin{subfigure}{.5\textwidth}
      \centering
      \includegraphics[width=.8\linewidth]{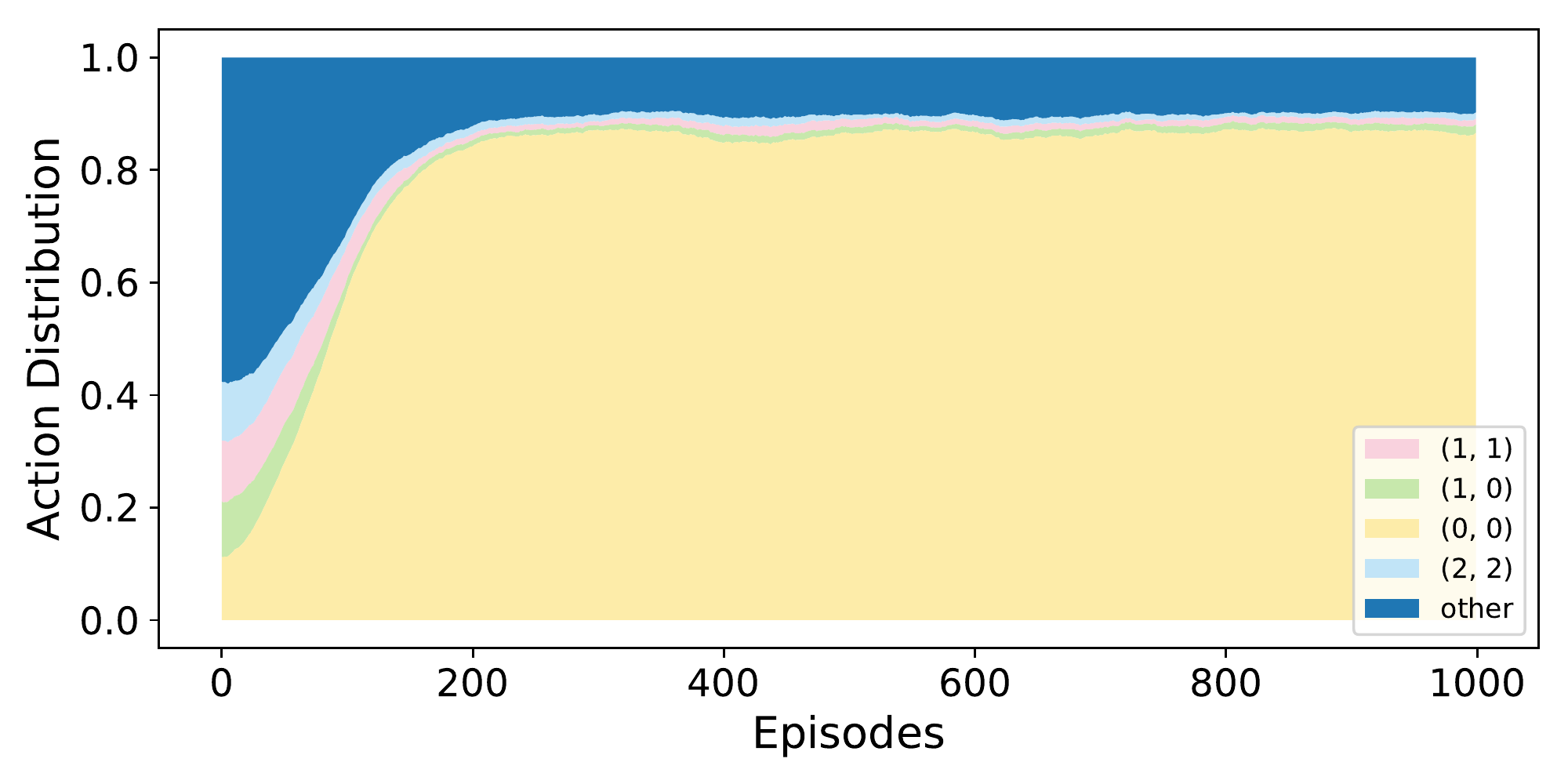}  
      \caption{Bi-AC}
      \label{fig:maxtrix2-biac}
    \end{subfigure}

    \begin{subfigure}{.5\textwidth}
        \centering
      \includegraphics[width=.8\linewidth]{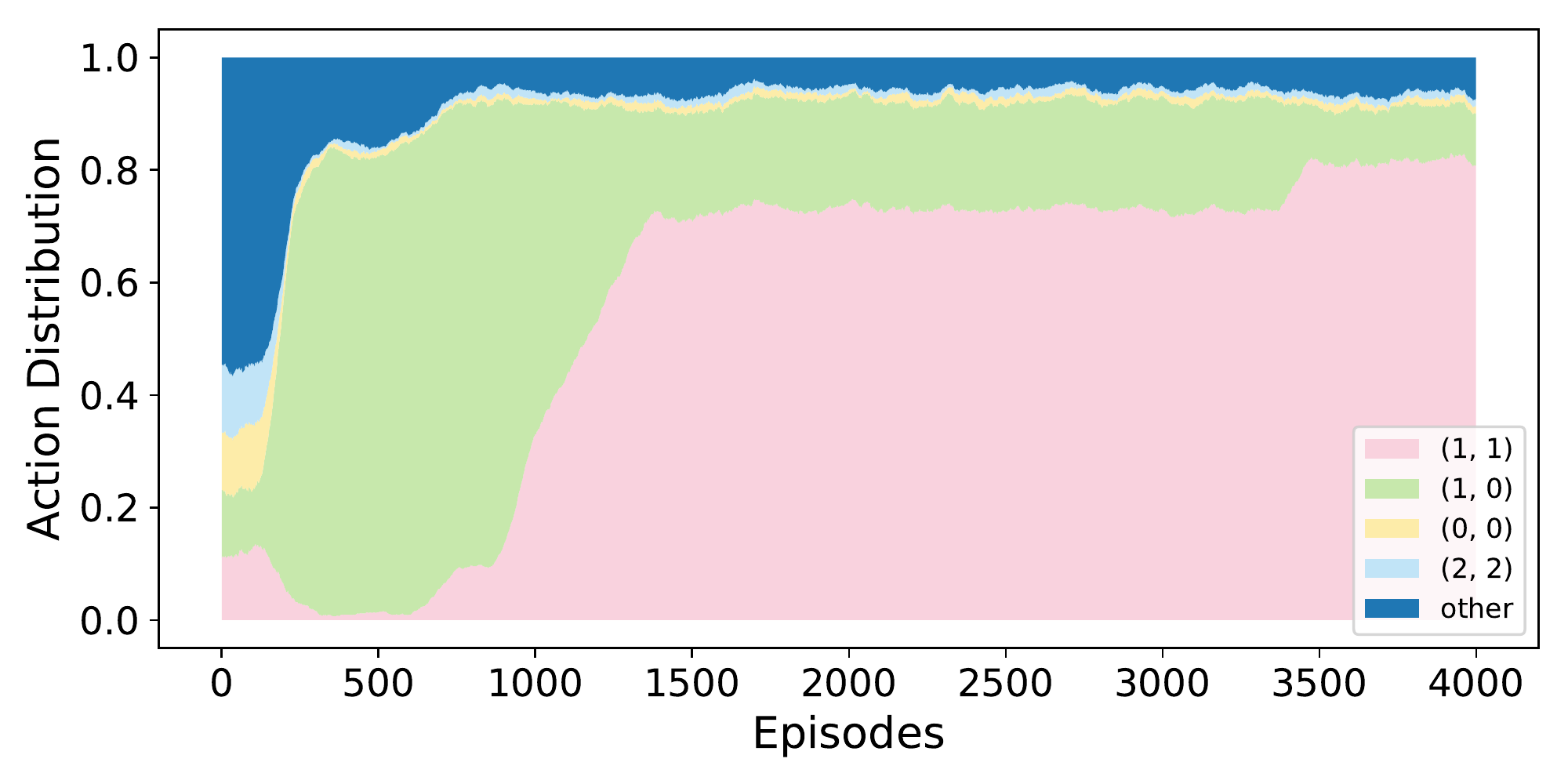}  
      \caption{MADDPG}
      \label{fig:matrix2-ddpg}
    \end{subfigure}

    \caption{Distribution of entry selected by different algorithms while training. Each color represent the possibility in the recent 100 steps to choose the corresponding entry.}
    \label{fig:matrix_game2}
    \end{figure}
    
    \begin{table}[t]
    \centering
    \small
    \begin{tabular}{llllll}
    \hline
    
    \multicolumn{1}{|l|}{}   & \multicolumn{1}{|l|}{Leader}   & \multicolumn{1}{|l|}{Follower} & \multicolumn{1}{|l|}{Optimality} \\ \cline{1-4} 
    \multicolumn{1}{|l|}{Bi-AC}   & \multicolumn{1}{l|}{28.5}   & \multicolumn{1}{l|}{ 28.5}   & \multicolumn{1}{l|}{90\%}  \\ \cline{1-4} 
    \multicolumn{1}{|l|}{I-DQN}   & \multicolumn{1}{l|}{20.55}   & \multicolumn{1}{l|}{ 20.55}   & \multicolumn{1}{l|}{37\%}  \\ \cline{1-4} 
    \multicolumn{1}{|l|}{MADDPG}   & \multicolumn{1}{l|}{23.4}   & \multicolumn{1}{l|}{23.4}   & \multicolumn{1}{l|}{56\%}   \\ \cline{1-4} 
    \multicolumn{1}{|l|}{TD3}   & \multicolumn{1}{l|}{15.3}   & \multicolumn{1}{l|}{15.3}   & \multicolumn{1}{l|}{2\%}   \\ \cline{1-4} 
    \multicolumn{1}{|l|}{Hysteretic-DQN}   & \multicolumn{1}{l|}{13.25}   & \multicolumn{1}{l|}{13.25}   & \multicolumn{1}{l|}{0\%}   \\ \cline{1-4} 
    \multicolumn{1}{|l|}{Lenient-DQN}   & \multicolumn{1}{l|}{13.4}   & \multicolumn{1}{l|}{13.4}   & \multicolumn{1}{l|}{0\%}   \\ \cline{1-4} 

    \end{tabular}
    \caption{Result of Escape game. The first two column means the average reward while the third column means the percentage of converging to C-Z, the global optimal point.}
    \label{table:Matrix_game_1}
    \end{table}

    \begin{table}[t]
    \centering
    \small
    \begin{tabular}{llllll}
    \hline
    \multicolumn{1}{|l|}{}   & \multicolumn{1}{|l|}{Leader}   & \multicolumn{1}{|l|}{Follower} & \multicolumn{1}{|l|}{Optimality} \\ \cline{1-4} 
    \multicolumn{1}{|l|}{Bi-AC}   & \multicolumn{1}{l|}{20}   & \multicolumn{1}{l|}{15}   & \multicolumn{1}{l|}{100\%}  \\ \cline{1-4} 
    \multicolumn{1}{|l|}{I-DQN}   & \multicolumn{1}{l|}{9.5}   & \multicolumn{1}{l|}{ 5.5}   & \multicolumn{1}{l|}{0\%}  \\ \cline{1-4} 
    \multicolumn{1}{|l|}{MADDPG}   & \multicolumn{1}{l|}{12}   & \multicolumn{1}{l|}{4.5}   & \multicolumn{1}{l|}{0\%}   \\ \cline{1-4} 
    \multicolumn{1}{|l|}{TD3}   & \multicolumn{1}{l|}{11.5}   & \multicolumn{1}{l|}{5}   & \multicolumn{1}{l|}{0\%}   \\ \cline{1-4} 
    \multicolumn{1}{|l|}{Hysteretic-DQN}   & \multicolumn{1}{l|}{9.1}   & \multicolumn{1}{l|}{5.9}   & \multicolumn{1}{l|}{0\%}   \\ \cline{1-4}
    \multicolumn{1}{|l|}{Lenient-DQN}   & \multicolumn{1}{l|}{8.8}   & \multicolumn{1}{l|}{6.2}   & \multicolumn{1}{l|}{0\%}   \\ \cline{1-4} 
 
    \end{tabular}
    \caption{Result of Maintain game.The first two column means the average reward while the third column means the percentage of converging to A-X.}
    \label{table:Matrix_game_2}
    \end{table}
    
    \begin{table}[t]
    \centering
    \small
    \begin{tabular}{lllll}
    \hline
    \multicolumn{1}{|l|}{}   & \multicolumn{1}{|l|}{Leader}   & \multicolumn{1}{|l|}{Follower}  \\ \cline{1-3} 
    \multicolumn{1}{|l|}{Bi-AC}   & \multicolumn{1}{l|}{72.2\%}   & \multicolumn{1}{l|}{25.2\%}     \\ \cline{1-3} 
    \multicolumn{1}{|l|}{I-DQN}   & \multicolumn{1}{l|}{58.2\%}   & \multicolumn{1}{l|}{41.3\%}    \\ \cline{1-3} 
    \multicolumn{1}{|l|}{MADDPG}   & \multicolumn{1}{l|}{37.5\%}   & \multicolumn{1}{l|}{60.2\%}     \\ \cline{1-3} 
    \multicolumn{1}{|l|}{TD3}   & \multicolumn{1}{l|}{38.6\%}   & \multicolumn{1}{l|}{55.1\%}    \\ \cline{1-3} 
    \end{tabular}
    \caption{Result of Traffic Merge. The Leader column shows the rate that the car goes first is from the main lane. The sum of the two columns is not 100\% due to a small rate of crash.}
    \label{table:Traffic}
    \end{table}

\subsection{Matrix Game}

    We tested two matrix games named Escape and Maintain whose payoff tables are shown in Table \ref{table:coordination-game} and Table \ref{table:se-vs-ne} respectively. 
    
    We designed the Escape game in Table \ref{table:coordination-game} to show that our algorithm has the capability to converge to the SE which is better than the average NE in a cooperative game. We expected Bi-AC converged to the C-Z point while other algorithms converged to either the A-X or the C-Z point. Note that the higher-left $2 \times 2$ part of the matrix will lead to the sub-optimal A-X point for an NE learner, unless the joint action C-Z is explored enough for an escape. 
    
    We ran each algorithm for 100 times and the results are provided in Table \ref{table:Matrix_game_1}. We see that Bi-AC achieved higher rewards for both agents than all the baseline algorithms. Also, Bi-AC converged to the global optimal point C-Z in $90\%$ trials leading the baseline algorithms by a large margin. Note that Bi-AC did not converge to the optimal point with $100\%$ probability because of the usage of neural network function approximations. \footnote{We performed the tabular version of Bi-AC to confirm this.} We also found MADDPG outperformed Hysteretic-DQN, Lenient-DQN and TD3 in this game. The reason may be that these three algorithms converges with lower bias which could also make it more difficult to converge to the isolated optimal point C-Z. The convergence curve of Bi-AC is provided in Fig. ~\ref{fig:matrix_1}. From the very start of training, the Q-value was trained to the correct estimated value $30$. And once the Q-value of the follower was trained to make C-Z higher than C-X and C-Y, the follower started to learn the best response toward action C of the leader. Then after the follower chose Z as the response with high probability, the probability for leader to choose action C started to increase, from about the 100th episode. The reason that the Q-value of the leader for C-Z converged before the follower's Q-value is that the learning rate for Q-function of the leader is higher than the follower. This asymmetric parameter setting is because the asymmetric architecture in our algorithm.
    We also find that the probability of action C of leader agent did not converge to $1.0$. It was because there was still exploration. 
    
    We designed the Maintain game in Table \ref{table:se-vs-ne} to show that Bi-AC algorithm is able to achieve the SE which is Pareto superior to all the NEs. As discussed, the A-X point is the SE and Pareto optimality point but not an NE. We expected our algorithm to converge to the A-X point while other NE learners converge to B-Y or C-Z. Particularly, NE learners are not able to maintain in A-X because B-X is a better point for the row player. The result is provided in table \ref{table:Matrix_game_2}. We can see that Bi-AC achieved higher rewards for both agents and converged to the A-X point in all the trials while the baselines never converged to the optimality. Fig. ~\ref{fig:matrix_game2} shows the learning processes of the joint policies for Bi-AC and MADDPG. Bi-AC converged to the A-X point smoothly while MADDPG converged to the B-Y point as expected. 
    
    \begin{figure}[t]
    \centering
    \includegraphics[width=0.45\textwidth]{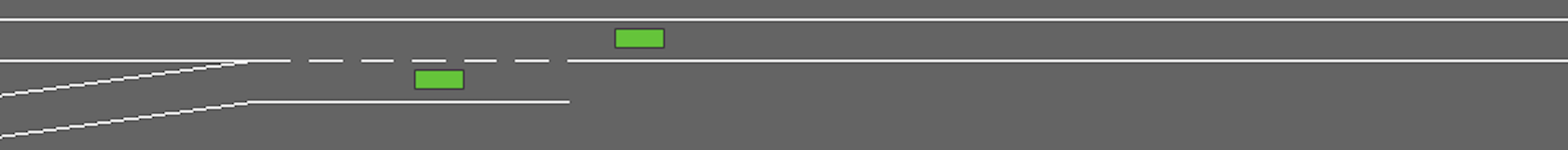}
    \caption{An illustration of the Traffic Merge environment.}
    \label{fig:highway_env}
    \end{figure}

\subsection{Highway Merge}
    
    \begin{figure}[t]
    \centering
    \includegraphics[width=0.45\textwidth]{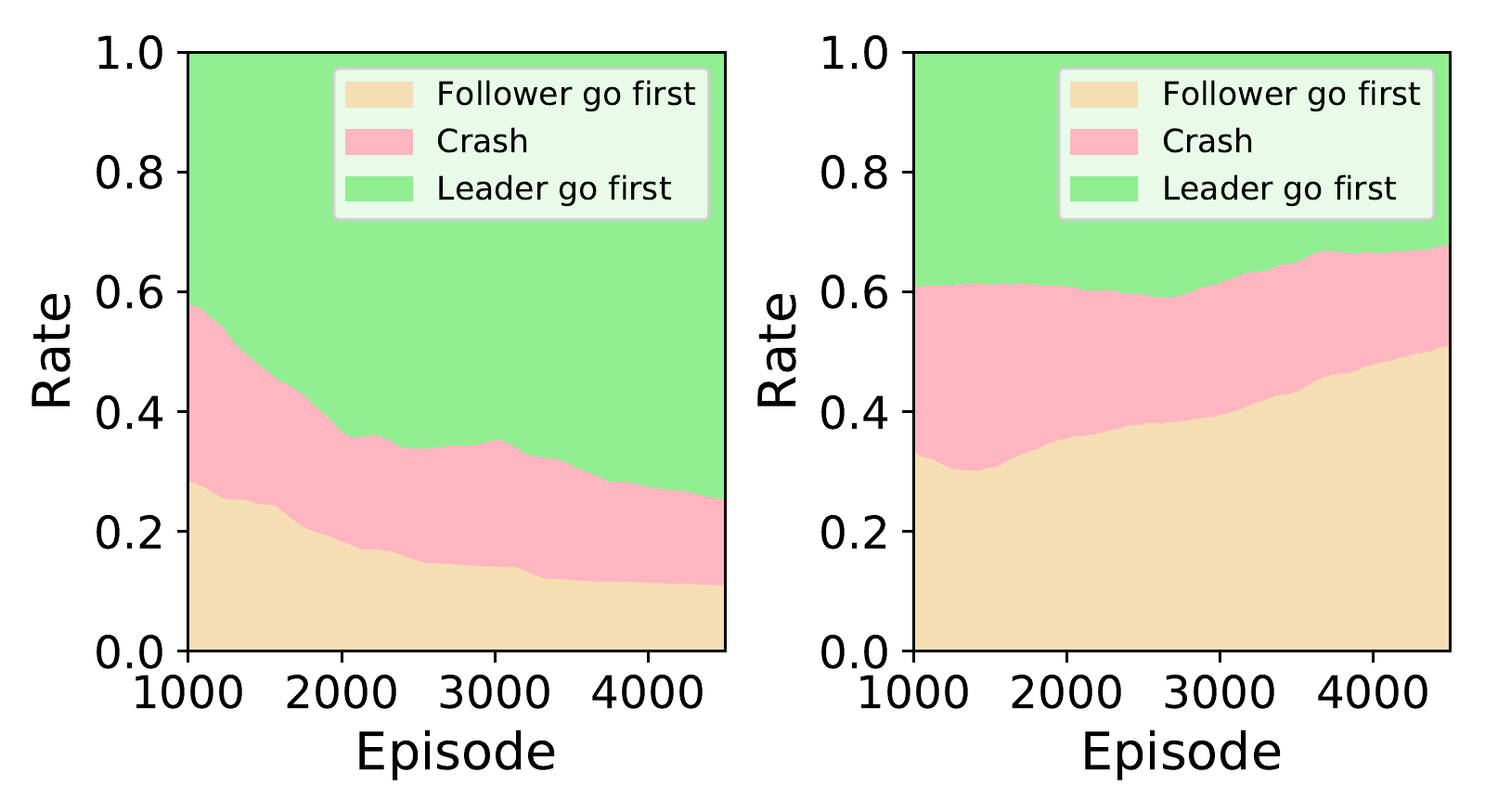}
    \caption{The training curve of highway merge problem. The left and right figures are the curve of Bi-AC and MADDPG respectively. The figures of other baselines are similar with MADDPG.}
    \label{fig:highway_merge_rate}
    \end{figure}

    We also designed a driving environment, where one car drives on the main lane and another car from the auxiliary lane wants to merge into the main lane. We used a slightly modified version of the Highway environment \cite{highway-env}, in which an agent can observe the kinematics of the nearby agent including its position and velocity, and has a discrete action space of $5$ actions including: LANE LEFT, LANE RIGHT, FASTER, SLOWER and IDLE, where IDLE means neither changing the current speed nor changing the LANE it is running on. The agents are rewarded for $50$ if it passes first, rewarded for $10$ if it passes second, and rewarded for $-10$ if two cars bump together as a penalty. An example overview of the environment is given in Fig.~\ref{fig:highway_env}. Note that if the cars from auxiliary lane do not choose LANE LEFT before his own lane disappears, the environment will automatically drive the cars to perform LANE LEFT right before they step away from the auxiliary lane. For the deployment of our algorithm Bi-AC, we set the car in main lane to be the leader and the car in auxiliary lane to be the follower.
   
    We ran our algorithm Bi-AC together with the baselines for $10$ times using $10$ random seeds to ensure all the algorithms face the same difficulty in training. The result is shown in Table~\ref{table:Traffic} and Fig.~{\ref{fig:highway_merge_rate}}. We found that in our setting, Bi-AC converged to a situation of going first with high probability of $70\%$. Note that the rest $30\%$ probability came to the situation that the auxiliary lane car started with a much higher speed which made it impossible to wait for the leader. We also found that the other baselines failed to choose which car to go first because they do not have a preference so their estimation of the main lane car going first is about $50\%$, as shown in Table~\ref{table:Traffic}. In Fig.~{\ref{fig:highway_merge_rate}} we see that from the very beginning Bi-AC has almost the same possibility of three results: car crush, main lane car going first and auxiliary lane car going first. With the training, Bi-AC makes the possibility of main lane car going first improve steadily until convergence, which shows that Bi-AC can solve real world problems like traffic merge on highway.
    
\section{Conclusion}

    In this paper, we consider Stackelberg equilibrium as a potentially better learning objective than Nash equilibrium in coordination environments due to its certainty and optimality. We formally define the bi-level reinforcement learning problem as the multi-state model-free Stackelberg equilibrium learning problem and empirically study the relationship between the cooperation level and the superiority of Stackelberg equilibrium to Nash equilibrium. We then propose a novel bi-level actor-critic algorithm which is trained centrally and executed decentrally. Our experiments on matrix games and a highway merge environment demonstrate the effectiveness of our algorithm to find the Stackelberg solutions. 

\section{Acknowledgments}
This work is supported by "New Generation of AI 2030" Major Project 2018AAA0100900 and NSFC 61702327, 61632017.

\bibliographystyle{aaai}
\bibliography{bilevel-opt-marl}

\section{Supplementary Material}

\subsection{Comparison of SE with NE}

    We generate the payoffs for a $n \times n$ matrix game using the multivariate normal distribution with 0 for mean, 1 for variance and various parameters for covariance, which represents the cooperation level of the generated matrix game. When the covariance equals to 1, the game is fully cooperative. $n$ is ranging from $5 \times 5$ to $100 \times 100$. 
    
    \begin{figure}[h]
        \centering
        \includegraphics[width=1.0\columnwidth]{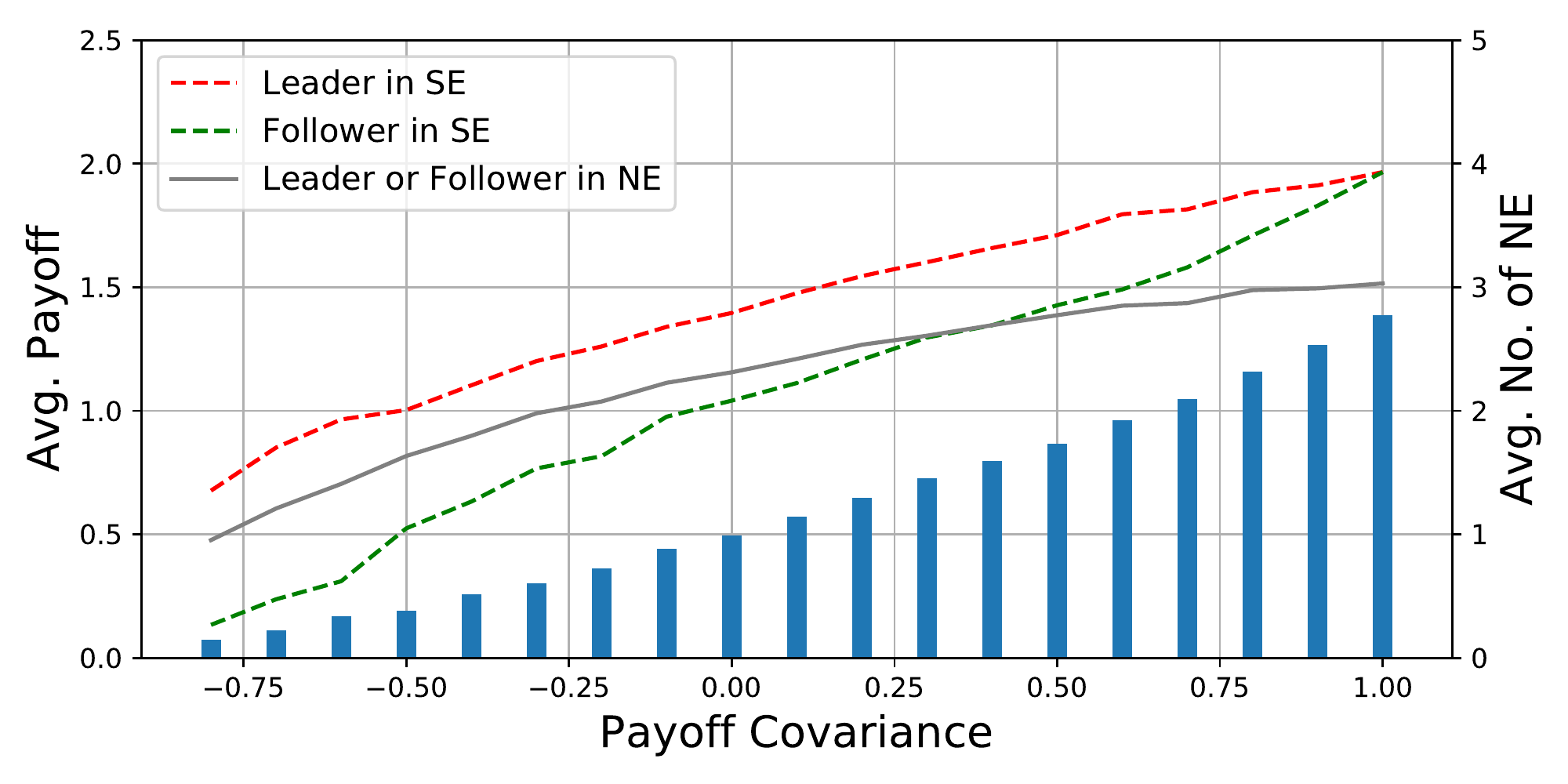}
        \caption{$5 \times 5$ averaged over 2000 trials.}
        \label{fig:bi-ac}
    \end{figure}
    
    \begin{figure}[h]
        \centering
        \includegraphics[width=1.0\columnwidth]{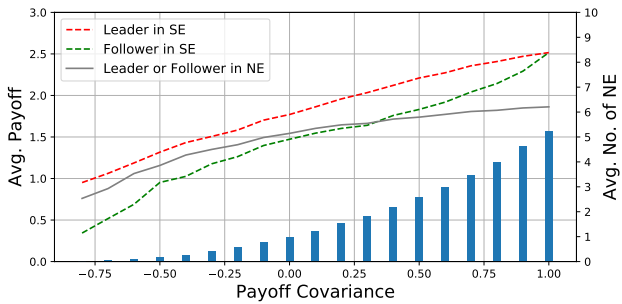}
        \caption{$10 \times 10$ averaged over 2000 trials.}
        \label{fig:bi-ac}
    \end{figure}
    
    \begin{figure}[h]
        \centering
        \includegraphics[width=1.0\columnwidth]{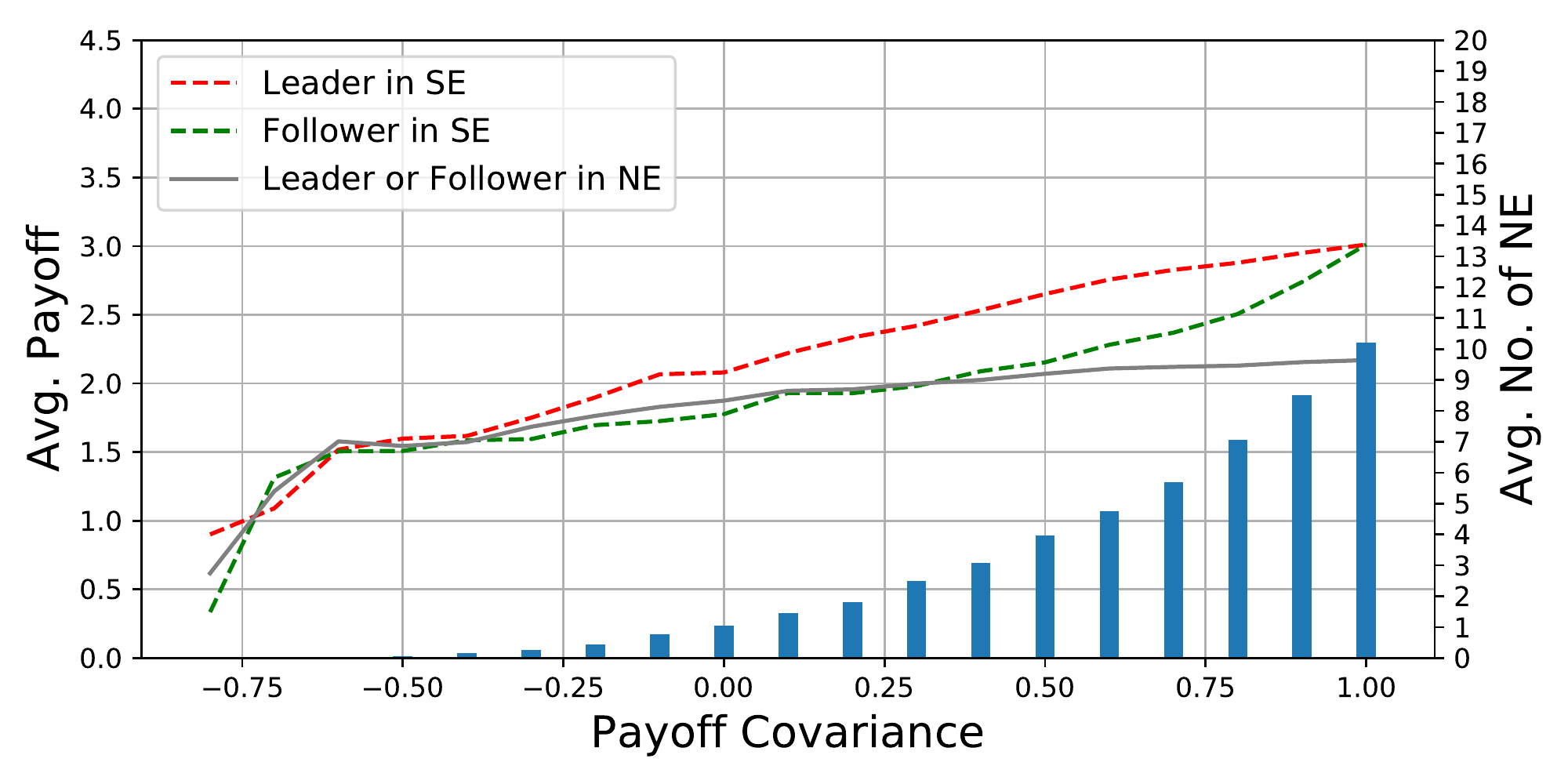}
        \caption{$20 \times 20$ averaged over 500 trials.}
        \label{fig:bi-ac}
    \end{figure}
    
    \begin{figure}[h]
        \centering
        \includegraphics[width=1.0\columnwidth]{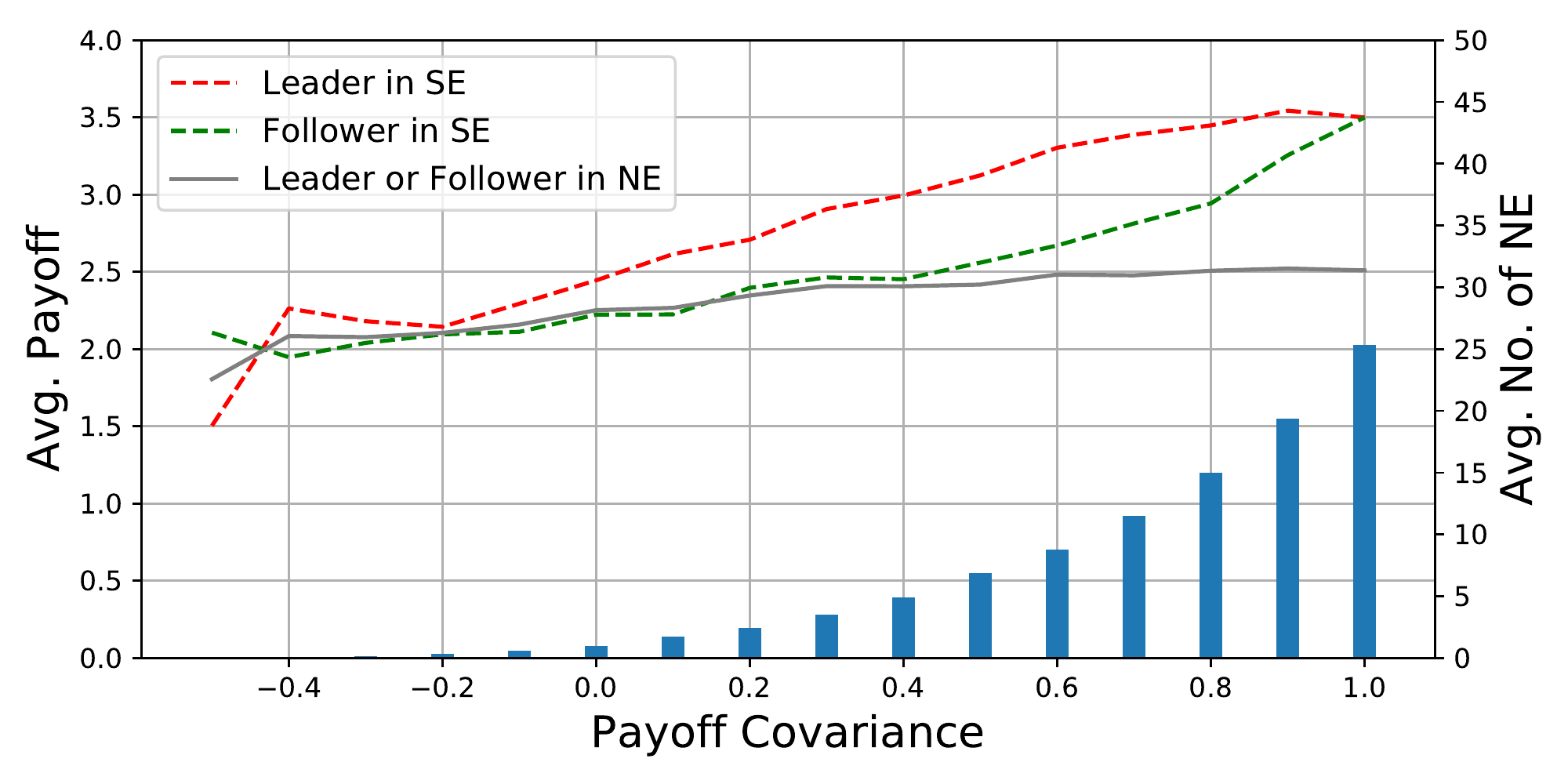}
        \caption{$50 \times 50$ averaged over 200 trials.}
        \label{fig:bi-ac}
    \end{figure}
    
    \begin{figure}[h]
        \centering
        \includegraphics[width=1.0\columnwidth]{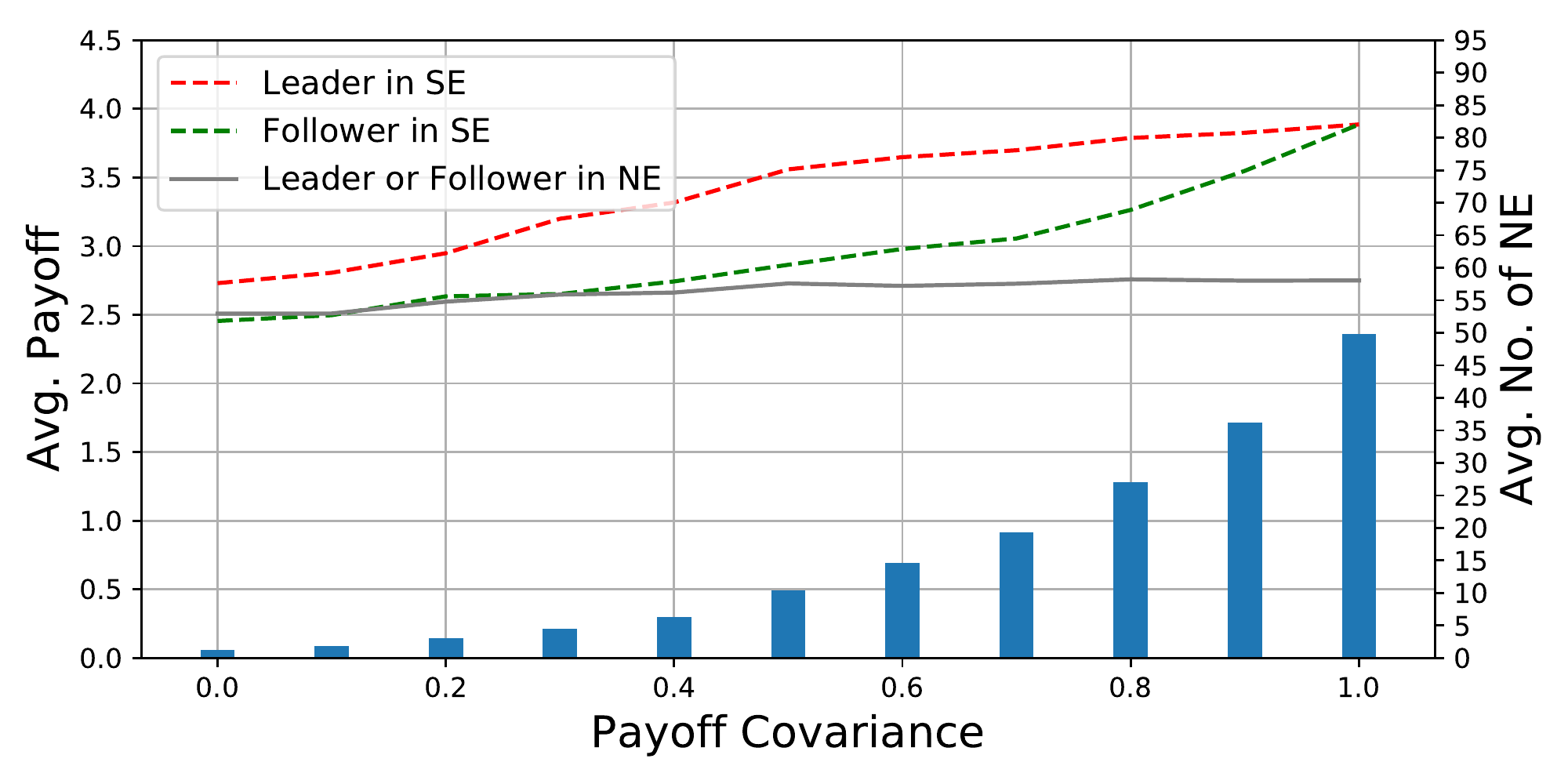}
        \caption{$100 \times 100$ averaged over 100 trials.}
        \label{fig:bi-ac}
    \end{figure}

\newpage

\subsection{Bi-AC Convergence Proof} \label{section:convergence-proof}
    
    Our convergence proof is similar to the proof of Nash-Q in \cite{hu2003nash}.

    \begin{assumption}
        Every state $s \in S$, and action $a_{k} \in A_{k}$ for k = 1, \dots, n, where $A_{k}$ stands for the action space for agent k, are visited infinitely often.
    \end{assumption}

    \begin{assumption}
    \label{a-conv-to-zero}
        The learning rate $\alpha^{t}$ satisfies the following conditions for all $s,t,a_{1},\dots,a_{n}$:
        \begin{enumerate}
            \item $0 \le \alpha^t(s,a_{1}, a_{2})<1$, $\sum_{t=0}^{\infty}\alpha^t(s,a_{1}, a_{2})=\infty, \sum_{t=0}^{\infty}[\alpha^{t}(s,a_{1}, a_{2})]^2 < \infty$, and the latter two hold uniformly and with probability $1$.
            \item $\alpha^t(s,a_{1}, a_{2})=0$ if $(s,a_{1},a_{2}) \ne (s^{t},a^{t}_{1},a^{t}_{2})$
        \end{enumerate}
    \end{assumption}

    \begin{lemma}[Szepesv´ari and Littman (1999), Corollary 5]
    \label{Q-to-zero}
        Assume that $\alpha^t$ satisfies Assumption \ref{a-conv-to-zero} and the mapping $P^t$:$\mathbb{Q}\to \mathbb{Q}$ satisfies the following condition: there exists a number $0 < \gamma < 1$ and a sequence $\lambda^t \ge 0$ converging to zero with probability $1$ such that $||P^tQ-P^tQ^*|| \le \gamma||Q-Q^*|| + \lambda^t$ for all $Q \in \mathbb{Q}$ and $Q^* = E[P^tQ^*]$, then the iteration defined by 
        \begin{align}
            Q^{t+1} = (1 - \alpha)Q^t + \alpha^t[P^tQ^t]
        \end{align}
        converges to $Q^*$ with probability $1$.
    \end{lemma}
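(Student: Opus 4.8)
The plan is to recognize the claim as an instance of the classical asynchronous stochastic-approximation convergence theorem (in the Dvoretzky / Robbins--Monro tradition used by Jaakkola, Jordan, and Singh 1994 and Tsitsiklis 1994), and essentially to reproduce the decomposition behind the cited Szepesv\'ari--Littman corollary rather than to invent a new argument. Index coordinates by state--action pairs $x=(s,\vec a)$, let $\mathcal{F}^t$ be the $\sigma$-algebra generated by the history through step $t$ (with $\alpha^t$ measurable with respect to it, but the fresh randomness realized inside $P^t$ not), and define the error process $\Delta^t := Q^t - Q^*$, treated coordinatewise.

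First I would put the iteration in error form: subtracting $Q^*$ from $Q^{t+1} = (1-\alpha^t)Q^t + \alpha^t[P^tQ^t]$ gives
\begin{align}
\Delta^{t+1} = (1-\alpha^t)\Delta^t + \alpha^t\big(P^tQ^t - Q^*\big),
\end{align}
and then split the driving term into a contractive part and a noise part,
\begin{align}
P^tQ^t - Q^* = \underbrace{\big(P^tQ^t - P^tQ^*\big)}_{=:G^t} \; + \; \underbrace{\big(P^tQ^* - Q^*\big)}_{=:w^t}.
\end{align}
By hypothesis $\|G^t\| \le \gamma\|\Delta^t\| + \lambda^t$ with $\lambda^t \to 0$ with probability $1$; and because $Q^* = E[P^tQ^*]$, the term $w^t = P^tQ^* - E[P^tQ^*]$ is a martingale difference, $E[w^t\mid\mathcal{F}^t]=0$. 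Since $Q^*$ is fixed and $P^t$ is built from bounded rewards and a stochastic transition kernel, $w^t$ has uniformly bounded conditional second moment, $E[\|w^t\|^2\mid\mathcal{F}^t]\le K$ (a bound of the form $K(1+\|\Delta^t\|^2)$ would also be enough).

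Second, I would invoke the standard theorem underlying Lemma~\ref{Q-to-zero}: a process of the form $\Delta^{t+1}(x) = (1-\alpha^t(x))\Delta^t(x) + \alpha^t(x)\big(G^t(x)+w^t(x)\big)$ with $\|G^t\|_\infty \le \gamma\|\Delta^t\|_\infty + \lambda^t$, $\lambda^t\to0$ almost surely, $E[w^t\mid\mathcal{F}^t]=0$, bounded conditional variance, and step-sizes obeying Assumption~\ref{a-conv-to-zero} together with the infinitely-often visitation assumption, converges to $0$ almost surely, hence $Q^t\to Q^*$. The remaining work is bookkeeping: the coordinatewise, one-entry-at-a-time update matches Assumption~\ref{a-conv-to-zero}(2) and the visitation assumption; the Robbins--Monro summability conditions are exactly Assumption~\ref{a-conv-to-zero}(1); and the pseudo-contraction with a vanishing additive error is precisely the hypothesis imposed on $P^t$.

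The main obstacle is not any single computation but making the reduction airtight on two points. (i) One must fix the filtration carefully so that $w^t$ is genuinely a martingale-difference sequence --- in particular, that $\alpha^t(s,\vec a)$ and the identity of the updated coordinate are $\mathcal{F}^t$-measurable while the next-state sample entering $P^t$ is not. (ii) One must show the random perturbation $\lambda^t$ does not destroy convergence; the clean route is a two-stage argument: given $\varepsilon>0$, use $\lambda^t\to0$ to absorb $\lambda^t$ into slack eventually, contract $\limsup_t\|\Delta^t\|_\infty$ into a ball of radius $O(\varepsilon/(1-\gamma))$ via the contraction estimate, then let $\varepsilon\downarrow0$. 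A secondary point is establishing a priori boundedness of $\{Q^t\}$ (equivalently of $\{\Delta^t\}$) before the contraction takes effect, which can be done by dominating $\|\Delta^t\|_\infty$ by an auxiliary scalar stochastic-approximation recursion and checking it stays bounded almost surely; everything after that is routine.
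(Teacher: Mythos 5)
The paper never proves Lemma~\ref{Q-to-zero}: it is imported verbatim (following Hu and Wellman's Nash-Q analysis) as Corollary~5 of Szepesv\'ari and Littman (1999), and the paper's convergence argument only uses it as a black box together with the fixed-point and contraction lemmas. So there is no in-paper proof to compare against; what you wrote is a sketch of how the cited result itself is established. Your route --- pass to the error recursion $\Delta^{t+1}=(1-\alpha^t)\Delta^t+\alpha^t(G^t+w^t)$ with $G^t=P^tQ^t-P^tQ^*$ controlled by the pseudo-contraction and $w^t=P^tQ^*-E[P^tQ^*]$ a martingale difference, then appeal to asynchronous stochastic approximation in the Jaakkola--Jordan--Singh/Tsitsiklis style and absorb the vanishing perturbation $\lambda^t$ by an $\varepsilon$-ball limsup argument --- is exactly the machinery underlying the Szepesv\'ari--Littman corollary, so the approach is sound in spirit and, if anything, more explicit than what the paper offers.

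Two caveats keep it from being a complete proof of the statement as written. First, the lemma as transcribed carries no moment condition on the noise: for an abstract operator $P^t$, the hypotheses $\|P^tQ-P^tQ^*\|\le\gamma\|Q-Q^*\|+\lambda^t$ and $Q^*=E[P^tQ^*]$ do not by themselves yield $E[\|w^t\|^2\mid\mathcal{F}^t]\le K(1+\|\Delta^t\|^2)$, which the stochastic-approximation theorem you invoke requires; you supply it by importing structure (bounded rewards, a stochastic transition kernel) from the particular $P^t$ instantiated later in the paper, i.e.\ you are really proving the lemma for that concrete operator rather than the abstract statement --- acceptable for the paper's purposes, but it should be flagged, since the original Szepesv\'ari--Littman result states such a condition explicitly. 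Second, your middle step is nearly circular as phrased: ``invoke the standard theorem underlying Lemma~\ref{Q-to-zero}'' is essentially invoking the lemma itself. To make the argument self-contained you must either carry out in full the two pieces you only sketch --- a priori almost-sure boundedness of $\{\Delta^t\}$ and the $\lambda^t$-absorption/contraction of $\limsup_t\|\Delta^t\|_\infty$ down to $0$ --- since those are the actual content of the corollary, or cite a genuinely distinct result (e.g.\ the Jaakkola--Jordan--Singh theorem together with a boundedness lemma) and verify its hypotheses, including the measurability bookkeeping for $\mathcal{F}^t$ and the coordinatewise updates of Assumption~\ref{a-conv-to-zero}, one by one.
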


    \begin{definition}
        Let $Q = (Q_1, Q_2),$ where $Q_1 \in \mathbb{Q}_{1}, Q_2 \in \mathbb{Q}_{2}$, and $\mathbb{Q} = \mathbb{Q}_1 \times \mathbb{Q}_2$. $P^t: \mathbb{Q} \to \mathbb{Q}$ is a mapping on the complete metric space $\mathbb{Q} \to \mathbb{Q}$, $P^tQ = (P^tQ_1, P^tQ_2)$, where 
        \begin{align*}
            P^tQ_k(s, a_1, a_2) = r^t_k(s, a_1, a_2) + 
            \gamma Q_k(s', \pi_1(s'), \pi_2(s')), \\ \text{for k = 1,2}
        \end{align*}
        where $s'$ is the state at time $t+1$.
    \end{definition}




    \begin{lemma}
    \label{q_equil_point}
        For any $s \in S, a_1 \in A_2, a_2 \in A_2$, a convergence point $Q^*$ in the stage game satisfies
        \begin{align*}
            &Q^*_k(s,a_1, a_2) = r_k(s,a_1,a_2) + \\
            &\gamma\sum_{s'\in S}p(s'|s, a_1, a_2)Q^*_k(s', \pi_1(s'), \pi_2(s')), \text{for k = 1,2}
        \end{align*}
        
    \end{lemma}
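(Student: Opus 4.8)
The plan is to unwind the definition of the operator $P^t$ at the point $Q^*$ and then take expectations over the randomness of a single transition. By Lemma~\ref{Q-to-zero}, a point $Q^*$ to which the critic iteration $Q^{t+1}=(1-\alpha^t)Q^t+\alpha^t[P^tQ^t]$ converges satisfies the fixed-point relation $Q^* = \mathbb{E}[P^tQ^*]$, so it suffices to evaluate the right-hand side componentwise.

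Fix $s\in S$, $a_1\in A_1$, $a_2\in A_2$ and $k\in\{1,2\}$. By definition of $P^t$,
\[
P^tQ^*_k(s,a_1,a_2) = r^t_k(s,a_1,a_2) + \gamma\, Q^*_k\big(s',\pi_1(s'),\pi_2(s')\big),
\]
where $s'=s^{t+1}$ is the random successor state and $r^t_k$ is the sampled (possibly stochastic) reward. The only sources of randomness on the right are $r^t_k$ and $s'$: the policies $\pi_1(s')$, $\pi_2(s')$ are the leader/follower strategies of the stage game $(Q^*_1(s'),Q^*_2(s'))$, and under the standing assumption that every such stage game has a global optimal point which is the one selected by Bi-AC, $\big(\pi_1(s'),\pi_2(s')\big)$ is a deterministic function of $Q^*$ and $s'$ alone — in particular it does not depend on the reward sample. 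Taking expectations, using $\mathbb{E}[r^t_k(s,a_1,a_2)] = r_k(s,a_1,a_2)$ and $s'\sim p(\cdot\mid s,a_1,a_2)$, gives
\[
\mathbb{E}\big[P^tQ^*_k(s,a_1,a_2)\big] = r_k(s,a_1,a_2) + \gamma\sum_{s'\in S} p(s'\mid s,a_1,a_2)\, Q^*_k\big(s',\pi_1(s'),\pi_2(s')\big),
\]
and combining with $Q^*_k(s,a_1,a_2)=\mathbb{E}[P^tQ^*_k(s,a_1,a_2)]$ yields the claimed identity for $k=1,2$.

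The computation is otherwise routine; the single point that requires care — and the step I expect to be the crux — is the assertion that $\pi_1(s')$ and $\pi_2(s')$ are well-defined and noise-independent. This is exactly where the global-optimal-point assumption enters: without uniqueness of the selected stage-game value, $\textit{Stackelberg}_k(Q^*_1(s'),Q^*_2(s'))$, and hence $P^tQ^*$ itself, would be set-valued, the expectation $\mathbb{E}[P^tQ^*]$ would not be well-defined, and the fixed-point equation would fail to close. I would therefore open the proof by fixing, once and for all, the deterministic Stackelberg selection $s'\mapsto(\pi_1(s'),\pi_2(s'))$ associated with $Q^*$, so that $P^tQ^*$ is a bona fide function whose expectation is the quantity displayed above.
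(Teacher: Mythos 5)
Your expansion of $\mathbb{E}[P^tQ^*]$ over the reward noise and the successor state is fine (it is in fact the same computation the paper performs in Lemma~\ref{expected value}), but the step you use to get started does not hold as cited. You claim that ``by Lemma~\ref{Q-to-zero}'' any point to which the critic iteration converges satisfies $Q^*=\mathbb{E}[P^tQ^*]$. Lemma~\ref{Q-to-zero} says the opposite implication: it \emph{assumes} the fixed-point relation $Q^*=\mathbb{E}[P^tQ^*]$ (together with the pseudo-contraction property of $P^t$) and \emph{concludes} convergence of the iteration to $Q^*$; it does not assert that the limit of a convergent iteration is a fixed point of the expected operator. Worse, in the paper's own chain of reasoning the relation $\mathbb{E}[P^tQ^*]=Q^*$ is precisely Lemma~\ref{expected value}, which is \emph{derived from} the present lemma and only then fed into Lemma~\ref{Q-to-zero} in the final theorem. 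So your route runs the argument backwards: as supported by the references you invoke, it is circular, and the converse fact you actually need (that a limit of a Robbins--Monro-type iteration must be a root of the expected update, under Assumption~\ref{a-conv-to-zero} and infinite visitation) is never argued.

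For comparison, the paper's proof of this lemma is a direct stationarity observation, not an appeal to $P^t$ at all: if the displayed identity failed at some $(s,a_1,a_2)$, then the next (expected) update at that entry would change the Q-value, so $Q^*$ would not be an equilibrium/convergence point of the update rule; the identity is simply the stationarity condition of the iteration. (That argument is itself informal, but it is the intended content.) Your closing discussion of why the stage-game selection $s'\mapsto(\pi_1(s'),\pi_2(s'))$ must be a well-defined, noise-independent map under the global-optimal-point assumption is a genuinely useful observation and is needed for the expectation to make sense, but it does not repair the direction-of-implication problem above: to fix the proof you would either reproduce the paper's stationarity argument directly, or independently establish that convergence of the stochastic iteration forces $Q^*=\mathbb{E}[P^tQ^*]$ rather than citing Lemma~\ref{Q-to-zero} for it.
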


    If the equation is not satisfied, in the next step of update, the Q-value will change and thus this is not a equilibrium point.

    \begin{lemma}
    \label{expected value}
        For a two player stochastic game, $E[P^tQ^*]=Q^*$, where $Q^*=(Q^*_1, Q^*_2)$.
    \end{lemma}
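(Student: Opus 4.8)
The plan is to unfold the definition of the operator $P^t$, take the expectation over the only two sources of randomness it contains --- the immediate reward $r^t_k(s,a_1,a_2)$ realized at step $t$ and the successor state $s'$ drawn from $p(\cdot\,|\,s,a_1,a_2)$ --- and then recognize the resulting expression as exactly the fixed-point identity already established in Lemma \ref{q_equil_point}. Nothing deeper than linearity of expectation is needed; the substantive content has been pushed into the characterization of the convergence point $Q^*$.

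Concretely, fix $k\in\{1,2\}$ and a triple $(s,a_1,a_2)$. By the definition of $P^t$,
\[
P^tQ^*_k(s,a_1,a_2) = r^t_k(s,a_1,a_2) + \gamma\, Q^*_k\big(s',\pi_1(s'),\pi_2(s')\big),
\]
where $\pi_1$ (the leader) and $\pi_2$ (the follower best response) are the Stackelberg policies induced by the stage game of $Q^*$, hence deterministic functions of $s'$ and $Q^*$ alone. First I would take the expectation of the reward term: by the definition of the reward function, $E[r^t_k(s,a_1,a_2)] = r_k(s,a_1,a_2)$ (immediate if rewards are deterministic, and the definition of $R_k$ as the expected immediate reward otherwise). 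Next I would handle the discounted term: once $Q^*$ is fixed, $Q^*_k(\cdot,\pi_1(\cdot),\pi_2(\cdot))$ is a fixed function on $S$, so with $s'\sim p(\cdot\,|\,s,a_1,a_2)$ linearity gives
\[
E\big[\gamma\, Q^*_k(s',\pi_1(s'),\pi_2(s'))\big] = \gamma\sum_{s'\in S} p(s'\,|\,s,a_1,a_2)\, Q^*_k(s',\pi_1(s'),\pi_2(s')).
\]
Summing the two pieces and invoking Lemma \ref{q_equil_point}, which states precisely that a convergence point satisfies $Q^*_k(s,a_1,a_2) = r_k(s,a_1,a_2) + \gamma\sum_{s'} p(s'|s,a_1,a_2)\,Q^*_k(s',\pi_1(s'),\pi_2(s'))$, we obtain $E[P^tQ^*_k(s,a_1,a_2)] = Q^*_k(s,a_1,a_2)$. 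Since $k$ and $(s,a_1,a_2)$ were arbitrary and $P^tQ^* = (P^tQ^*_1,P^tQ^*_2)$, this gives $E[P^tQ^*] = Q^*$.

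The proof is essentially bookkeeping, and I do not expect a real obstacle; the only point that needs care is being explicit about which quantities are random and which are frozen. In particular, one must observe that $\pi_1(s'),\pi_2(s')$ introduce no randomness beyond $s'$ --- they are read off deterministically from $Q^*$ at the state $s'$ --- so the inner term is a genuine function of $s'$ and the conditional expectation collapses to the transition-weighted sum above. A secondary bookkeeping remark is that the expectation is taken conditional on $(s,a_1,a_2)$ being the visited state--action pair at step $t$, which is exactly the regime in which $P^t$ is fed into the stochastic-approximation iteration of Lemma \ref{Q-to-zero}. Thus the lemma is the ``Bellman operator is unbiased at its fixed point'' fact, and it follows directly once Lemma \ref{q_equil_point} is in hand.
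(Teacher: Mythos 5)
Your proposal is correct and follows essentially the same route as the paper: both arguments reduce $E[P^tQ^*]$ to the transition-weighted sum $r_k(s,a_1,a_2)+\gamma\sum_{s'}p(s'|s,a_1,a_2)Q^*_k(s',\pi_1(s'),\pi_2(s'))$ and then invoke Lemma~\ref{q_equil_point} to identify it with $Q^*_k(s,a_1,a_2)$, the paper simply writing the chain in the reverse direction (starting from $Q^*$ and ending at the expectation). Your extra remark that $E[r^t_k]=r_k$ and that $\pi_1(s'),\pi_2(s')$ carry no randomness beyond $s'$ is just the bookkeeping the paper leaves implicit.
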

    
    \begin{proof}
        We use $\theta$ to represent the distribution of the state transition given current state $s$, action of the leader $a_1$, and the action of the follower $a_2$.
        
        Using Lemma \ref{q_equil_point}, we can get the following equations:
        \begin{align}
            &Q^*_k(s,a_1,a_2) \\
            &= r_k(s,a_1,a_2) + 
            \gamma\sum_{s'\in S}p(s'|s, a_1, a_2)Q^*_k(s', \pi_1(s'), \pi_2(s'))\\
            &= \sum_{s' \in S}p(s'|s,a_1,a_2)(\gamma Q^*_k(s', \pi_1(s'), \pi_2(s')) +  r_k(s,a_1,a_2))\\
            & = E_\theta[P^t_kQ^*_k(s,a_1,a_2)]
        \end{align}
        i.e., $E[P^tQ^*]=Q^*$.
    \end{proof}
    
    \begin{definition}
    \label{global-optimal}
    A joint action $a^* = \langle a_1, a_2 \rangle$ of the stage game in state $s$ is a global optimal point if every agent receive its highest payoff at this point. That is, for all k, 
    \begin{align}
        Q_k(s, a^*) \ge Q_k(s, a), \forall a \in A
    \end{align}
    \end{definition}

    \begin{assumption}
        Every stage game $(Q^t_1(s), Q^t_2(s))$ for all t and s, has a global optimal point , and agents' payoffs in this point are selected by the actor function to update the critic functions with probability $1$.
    \end{assumption}

     Note that in a two-agent game, if we have a global optimal point, we will find it as long as we get the correct actor function, . The assumption is exactly the same with Condition A in Assumption 3 of \cite{hu2003nash}, thus we can use the following definition to help proving the convergence.
    
    \begin{definition}[\cite{hu2003nash}, Definition 15]
        \begin{align}
        &||Q - \hat{Q}|| \\
        &\equiv max_{j,s}||Q^{j}(s) - \hat{Q}^{j}(s)||_{(j,s)} \\
        &\equiv max_{j,s,a_1,a_2}|Q^{j}(s,a_1, a_2) - \hat{Q}^{j}(s,a_1,a_2)|
        \end{align}
    \end{definition}
    
    \begin{lemma}[\cite{hu2003nash}, Lemma 16]
    \label{contract}
        $||P^tQ - P^t\hat{Q}|| \le \gamma ||Q-\hat{Q}||, \forall Q, \hat{Q} \in \mathbb{Q}$.
    \end{lemma}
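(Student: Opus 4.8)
The plan is to reduce the bi-level Bellman operator $P^t$ to an ordinary optimality operator acting separately in each agent's value coordinate, using the standing assumption that the actors select a \emph{global optimal point} of every stage game. Concretely, by Definition~\ref{global-optimal}, if $\langle \pi_1(s'),\pi_2(s')\rangle$ is a global optimal point of the stage game $(Q_1(s'),Q_2(s'))$, then for each $k\in\{1,2\}$
\[
Q_k\bigl(s',\pi_1(s'),\pi_2(s')\bigr)=\max_{a_1\in A_1,\,a_2\in A_2}Q_k(s',a_1,a_2),
\]
and likewise for $\hat Q$ with the (possibly different) point its actors select. This is the crucial step: although $P^t$ substitutes $Q$-dependent policies, the backed-up continuation value is simply the coordinate-wise maximum of $Q_k(s',\cdot)$, so the dependence on \emph{which} joint action was chosen drops out.

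Given this, I would subtract the two operators. The realized reward $r_k^t(s,a_1,a_2)$ and the realized successor state $s'$ are common to $P^tQ$ and $P^t\hat Q$, so the reward terms cancel and
\[
\bigl|P^tQ_k(s,a_1,a_2)-P^t\hat Q_k(s,a_1,a_2)\bigr|
=\gamma\,\bigl|\max_{a}Q_k(s',a)-\max_{a}\hat Q_k(s',a)\bigr|.
\]
Applying the elementary inequality $|\max_a f(a)-\max_a g(a)|\le\max_a|f(a)-g(a)|$ bounds the right-hand side by $\gamma\max_{a}|Q_k(s',a)-\hat Q_k(s',a)|\le\gamma\|Q-\hat Q\|$, the last step being just the definition of the norm. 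Taking the maximum over $k$, $s$, $a_1$, $a_2$ on the left then gives $\|P^tQ-P^t\hat Q\|\le\gamma\|Q-\hat Q\|$, as claimed. In the language of Lemma~\ref{Q-to-zero} this says we may take the error sequence $\lambda^t\equiv 0$; combined with $E[P^tQ^*]=Q^*$ (Lemma~\ref{expected value}) and Assumption~\ref{a-conv-to-zero}, Lemma~\ref{Q-to-zero} then delivers $Q^t\to Q^*$ with probability $1$, i.e.\ the overall convergence of Bi-AC.

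The main obstacle --- in fact essentially the only nonroutine point --- is justifying the first displayed identity: that the continuation value the actors pick equals the coordinate-wise maximum of $Q_k(s',\cdot)$. This is exactly what the global-optimal-point assumption buys, and it is a strong hypothesis. If the actors merely tracked a Stackelberg equilibrium that is not a global optimum, the leader's backed-up value would be $Q_1(s',a_1^\star,\mathrm{BR}(a_1^\star))$ rather than $\max_a Q_1(s',a)$, and $P^t$ would in general fail to be a $\gamma$-contraction in the sup norm --- the same gap that forces Nash-Q to impose its ``global optimal point or saddle point'' condition \cite{hu2003nash}. So the real content of the lemma is this structural reduction; the inequality chain that follows is standard.
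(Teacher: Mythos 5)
Your argument is correct, and it is essentially the paper's own route: the paper does not prove this lemma itself but imports it verbatim from \cite{hu2003nash} (their Lemma 16), justified by the remark that the global-optimal-point assumption coincides with Condition A of their Assumption 3 --- and your reduction of $P^t$ to a coordinate-wise max-backup followed by $|\max_a f - \max_a g| \le \max_a |f-g|$ is exactly the content of that cited lemma in the Condition A case. The only caveat, inherited from the original Nash-Q argument rather than introduced by you, is that the global-optimal-selection property is stated for the iterates $(Q_1^t(s),Q_2^t(s))$ while the contraction is asserted for all $Q,\hat Q \in \mathbb{Q}$, so one implicitly needs the selection rule to pick the global optimum for the generic $Q$'s being compared.
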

    
    \begin{theorem}
        Under assumption 1 - 3, the sequence $Q_{t} = (Q^t_1, Q^t_2)$, updated by 
        \begin{align}
            &\pi_1' \leftarrow \argmax_{\pi_1} Q_1(s', \pi_1, \argmax_{a_2} Q_2(s', \pi_1, a_2)), \\ 
            &a_2' \leftarrow \argmax_{a_2} Q_2(s', \pi_1', a_2), \\ 
            &Q_1(s, a_1, a_2) \leftarrow (1-\alpha)Q_1(s, a_1, a_2) \nonumber\\ 
            &~~~~~~~~~~~~~~~~~~~~~~~~~~+ \alpha (r_1 + \gamma Q_1(s', \pi_1', a_2')), \\
            &Q_2(s, a_1, a_2) \leftarrow (1-\alpha)Q_2(s, a_1, a_2)  \nonumber\\ 
            &~~~~~~~~~~~~~~~~~~~~~~~~~~+ \alpha (r_2 + \gamma Q_2(s', \pi_1', a_2')).
        \end{align}
        converges to a fixed value $Q^* = (Q^*_1, Q^*_2)$.
    \end{theorem}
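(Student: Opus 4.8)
The plan is to recognize the displayed update as an instance of the stochastic-approximation scheme of Lemma~\ref{Q-to-zero} (Szepesv\'ari and Littman) applied to the operator $P^t$ introduced above, so that the whole argument reduces to checking that the two hypotheses of that lemma hold for $P^t$, with the candidate limit being the fixed point $Q^*$ supplied by Lemma~\ref{q_equil_point}. Writing the update in the form $Q^{t+1}=(1-\alpha^t)Q^t+\alpha^t[P^tQ^t]$, where $P^t$ evaluates the continuation term at the selected joint policy $(\pi_1',a_2')$, it suffices to verify: (a) the learning rates satisfy the Robbins--Monro conditions; (b) $Q^*=E[P^tQ^*]$; and (c) $P^t$ is a $\gamma$-pseudo-contraction toward $Q^*$, i.e.\ $\|P^tQ-P^tQ^*\|\le\gamma\|Q-Q^*\|+\lambda^t$ with $\lambda^t\to 0$ almost surely. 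Given (a)--(c), Lemma~\ref{Q-to-zero} yields $Q^t\to Q^*$ with probability $1$, which is the assertion.

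Items (a) and (b) require essentially no work: (a) is exactly Assumption~\ref{a-conv-to-zero}, and (b) is Lemma~\ref{expected value}. For (c) I would prove the stronger, time-uniform bound $\|P^tQ-P^t\hat Q\|\le\gamma\|Q-\hat Q\|$ for all $Q,\hat Q\in\mathbb{Q}$ (this is Lemma~\ref{contract}), which allows one to take $\lambda^t\equiv 0$. The argument is component-wise in the norm $\|Q-\hat Q\|=\max_{k,s,a_1,a_2}|Q_k(s,a_1,a_2)-\hat Q_k(s,a_1,a_2)|$: for fixed $k$, $s$, $a_1$, $a_2$ the rewards cancel and
\[
|P^tQ_k(s,a_1,a_2)-P^t\hat Q_k(s,a_1,a_2)|=\gamma\,\bigl|\,Q_k(s',\pi_1',a_2')-\hat Q_k(s',\hat\pi_1',\hat a_2')\,\bigr|,
\]
where $(\pi_1',a_2')$ and $(\hat\pi_1',\hat a_2')$ are the Stackelberg points of the stage games $(Q_1(s'),Q_2(s'))$ and $(\hat Q_1(s'),\hat Q_2(s'))$. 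Here one invokes Assumption~3 together with Definition~\ref{global-optimal}: the selected point is the global optimum of the stage game, hence it is the joint action at which \emph{each} component $Q_k(s',\cdot)$ attains its maximum over $A$. Consequently the right-hand side equals $\gamma\,|\max_{a}Q_k(s',a)-\max_{a}\hat Q_k(s',a)|\le\gamma\max_a|Q_k(s',a)-\hat Q_k(s',a)|\le\gamma\|Q-\hat Q\|$ by non-expansiveness of the $\max$, and taking the maximum over $k,s,a_1,a_2$ gives Lemma~\ref{contract}.

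I expect the main obstacle to be precisely the reduction in the previous paragraph: because $P^t$ reads the continuation value at a \emph{jointly} chosen action (and, in the two-level view, one chosen through an inner best-response), $P^t$ is not a priori a non-expansion, and some structural hypothesis is needed to rule out the selection ``jumping'' between far-apart actions as $Q$ varies. Assumption~3 is exactly what delivers this: by forcing the selected Stackelberg point to coincide with the coordinate-wise argmax, it collapses the bi-level selection to an ordinary per-agent maximization, which is non-expansive; this is the analogue of Condition~A of Assumption~3 in \cite{hu2003nash} for the Nash operator, and the rest of the proof then follows the Nash-Q template. After that, the remaining steps — matching the indices of the update to the hypothesis of Lemma~\ref{Q-to-zero}, invoking it with $\lambda^t\equiv 0$, and reading off almost-sure convergence — are routine. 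One should also note, as the paper already remarks, that the resulting $Q^*$ only solves the bi-level Bellman equation and need not be the BiRL solution, so the theorem claims convergence to \emph{a} fixed value, not to the Stackelberg equilibrium of the Markov game.
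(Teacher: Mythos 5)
Your proposal is correct and follows essentially the same route as the paper: it reduces the theorem to Lemma~\ref{Q-to-zero} by verifying the Robbins--Monro conditions (Assumption~\ref{a-conv-to-zero}), the fixed-point property $E[P^tQ^*]=Q^*$ (Lemma~\ref{expected value}), and the contraction property of $P^t$ (Lemma~\ref{contract}), exactly as the paper does. The only difference is that you spell out the proof of the contraction step via Assumption~3 and the non-expansiveness of the max, whereas the paper imports that step verbatim from Lemma~16 of \cite{hu2003nash}.
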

    
    \begin{proof}
        First, $P^t$ is a contraction operator by Lemma \ref{contract}. Second, the fixed point condition, $E(P^tQ^*) = Q^*$ is established by Lemma \ref{expected value}. Then from Lemma \ref{Q-to-zero}, we know the Q value will converge to $Q^*$ with probability 1.
    \end{proof}

    Therefore, we know that the critic function is trained properly, and the actor function of followers can be corrected after the critic converge. 

\subsection{Discussion of convergence of Bi-AC}

    The former convergence proof is a proof  in a sufficient but obviously  not necessary condition. The algorithm may even not satisfy the condition in the very beginning of the training process given that a global optimal rarely appear in multi-agent setting. Even in the matrix game Maintain (table \ref{table:Matrix_game_2}) we showed in our paper, the global optimal point does not exists. 
    
    Nevertheless, in our experiment as reported in the paper, we found that the algorithm was easy to converge. This may because of that if the leading agent's action is already fixed in all possible following state, the training process of the follower agent, no matter it is the leader or the follower, can be transformed into a single agent scenario. Thus, if trained with infinite time and no state loop occurs, our algorithm can always converge given the idea of backward induction in extensive form game. This property is different from the traditional MARL algorithms in that other algorithm require the policy of other agents to be fixed in the current state, rather than the state afterward. 
    
    And as our experiments show, even with the allowance of revisiting the states, our algorithm still converge quickly. This suggests that there may be some potential to relax the conditions in our convergence proof or there may even be another total different method to prove the convergence of our algorithm, at least for some classes of games.

\subsection{An example uncovered by Bi-AC}

    We are able to construct a game and a corresponding joint policy such that the joint policy induces Stackelberg equilibrium in each state but cannot induce Stackelberg equilibrium for the entire game. 

    We consider a game with two agents ($X$ and $Y$) and two states ($s_1$ and $s_2$). Each agent has two actions ($A$ and $B$). $X$ is the leading agent and $Y$ is the following agent. We use $R$ to denote the reward function for agents where $(1,2)$ means $X$ gets reward $1$ and $Y$ gets reward $2$. Also, we use $T$ to denote the transition function. Specifically, let 
    \begin{align}
        &T(s_1, A, A) = s_1, \nonumber \\
        &T(s_1, A, B) = s_2, \nonumber   \\
        &T(s_1, B, A) = s_2, \nonumber \\
        &T(s_1, B, B) = s_1, \nonumber \\
        &R(s_1, A, A) = (0, 0), \nonumber \\
        &R(s_1, A, B) = (0, 10),\nonumber \\
        &R(s_1, B, A) = (10, 0), \nonumber \\
        &R(s_1, B, B) = (-1,-1),  \nonumber 
    \end{align}
    where $s_2$ is a terminal state where reward function and transition function are inapplicable. Intuitively, the BiRL solution is the joint policy $(B, A)$ in state $s_1$, which induces payoff $(10, 0)$. However, we find that $(A, B)$ also satisfies bi-level Bellman equation, although it is not a solution for BiRL. We write down the Q table w.r.t. the joint policy $(A, B)$:
    \begin{align}
        &Q_X(s_1, B, A) = 10, \nonumber \\
        &Q_X(s_1, A, B) = 0, \nonumber \\
        &Q_X(s_1, B, B) = -1, \nonumber \\
        &Q_X(s_1, A, A) = 0,\nonumber \\
        &Q_Y(s_1, B, A) = 0, \nonumber \\
        &Q_Y(s_1, A, B) = 10, \nonumber \\
        &Q_Y(s_1, B, B) = 9, \nonumber \\
        &Q_Y(s_1, A, A) = 10. \nonumber
    \end{align}
    Correspondingly, the value functions are: 
    \begin{align}
        V_X(s_1) = 0,
        V_Y(s_1) = 10.
    \end{align}
    We can verify that the above Q table satisfies the Bellman optimality equation, i.e. is a possible convergent point. Specifically, when $X$ chooses $B$ in $s_1$, $Y$ will also choose $B$ according to $Q_Y$ table, which results in payoff $(-1, 9)$. Otherwise, when $X$ chooses $A$, $Y$ will choose $B$, which results in payoff $(0, 10)$. Since $-1 < 0$, the Stackelberg equilibrium in $s_1$ is $(A, B)$.                                                                             
    
    In general, this situation is caused by the cyclic transition and the TD update of Bellman equation. Since $T(s_1, B, B) = s_1$, the value of $Q_Y(s_1, B, B)$ depends on $V_Y(s_1)$, where $s_1$ is the next state and $(A, B)$ is the policy supposed to follow. Although we explore $(B, B)$ in $s_1$, the policy applying to the subsequent state $s_1$ is still $(A, B)$, which results in a local stable point $(A, B)$. 
    
\end{document}